\documentclass[a4paper,11pt]{article}
\usepackage{cite}
\usepackage{hyperref}
\usepackage[top=1 in,bottom=1 in,left=1.0 in,right=1.0 in]{geometry}
\hypersetup{hidelinks}
\usepackage{amsthm}
\usepackage{amssymb,amsmath}
\usepackage{appendix}
\usepackage{xcolor}

\numberwithin{equation}{section}

\newtheorem{lemma}{Lemma}
\newtheorem{theorem}{Theorem}
\newtheorem{remark}{Remark}

\begin{document}

\title{{A unified approach to the AKNS, DNLS, KP and mKP hierarchies
in the anti-self-dual Yang-Mills reduction}}

\author{Shangshuai Li$^{1,2}$\footnote{Email: lishangshuai@nbu.edu.cn}
	~~ Ken-ichi Maruno$^{3}$\footnote{Email: kmaruno@waseda.jp}
	~~ Da-jun Zhang$^{4,5}$\footnote{Corresponding author. Email: djzhang@staff.shu.edu.cn}
	~~\\
	{\small $^1$School of Mathematics and Statistics, Ningbo University, Ningbo 315211, China} \\
	{\small $^2$Graduate School of Mathematics, Nagoya University, Nagoya 464-8602, Japan} \\
	{\small $^{3}$Department of Applied Mathematics, Faculty of Science and Engineering, } \\
	{\small Waseda University, Tokyo 169-8555, Japan} \\
	{\small $^{4}$Department of Mathematics, Shanghai University, Shanghai 200444, China} \\
	{\small $^{5}$Newtouch Center for Mathematics of Shanghai University,  Shanghai 200444, China} 	
}

\maketitle

\begin{abstract}
We show a unified approach to the Ablowitz-Kaup-Newell-Segur (AKNS) hierarchy
and
the unreduced derivative nonlinear Schr\"odinger (DNLS) hierarchies
(including the Kaup-Newell, Chen-Lee-Liu,  Gerdjikov-Ivanov and a generalized DNLS),
together with their multi-component extensions,
in the framework of the anti-self-dual Yang-Mills (ASDYM) reduction.
By restricting the gauge group to GL(2), the Kadomtsev-Petviashvili (KP) and modified KP (mKP) hierarchies are formulated in the ASDYM reduction via squared eigenfunction symmetry constraints.
In this case, the bilinearization of the generalized DNLS equations
can also be understood through this reduction.
Finally,  Gram-type exact solutions for the relevant equations
are presented in terms of quasi-determinants.

\begin{description}
\item[Keywords:] anti-self-dual Yang-Mills equation; reduction; gauge transformation; hierarchy structure; exact solution
\end{description}

\end{abstract}



\section{Introduction}\label{sec-1}

The anti-self-dual Yang-Mills (ASDYM) equation is a fundamental model in quantum field theory and various branches of mathematical physics \cite{Mason-Book}.
Since the 1970s, the ASDYM equation has been intensively studied, revealing numerous properties,
such as the connection to complex vector bundles \cite{Donaldson-1985, Ward-1977, Gu-1981}, the existence of instanton solutions \cite{Atiyah-1977, Wilczek-1977, Atiyah-1978, Corrigan-1978}, the Painlev\'e property \cite{Jimbo-1982, Ward-1984} and so on.

As an integrable system, the ASDYM equation has been investigated using a variety of integrable methods, such as the inverse scattering transformation \cite{Belavin-1978}, B\"acklund transformation based on the Riemann-Hilbert problem \cite{Ueno-1982}, a direct method inspired by the Sato theory \cite{Takasaki-1984}, bilinear method \cite{Sasa-1988,Ohta-2001,Ohta-2024}, Darboux transformation \cite{Nimmo-2000,Hamanaka-2020,Hamanaka-2022,Huang-2021}, direct linearization approach \cite{LSS-2025}, bi-differential graded algebra approach  \cite{Muller-Hoissen-2024} and so on.

The fact that the ASDYM equation can be reduced to many lower-dimensional classical integrable systems
(see Ward's conjecture
in \cite{Ward-1985}) has motivated a trend in searching for such integrable reductions.
For example, the Korteweg-de Vries (KdV) equation and the nonlinear Schr\"odinger (NLS) equation can be obtained from the SL(2) ASDYM equation \cite{Mason-1989}. The relationship between the SL($N$) ASDYM equation and the Gel'fand-Dickey hierarchy (which includes the KdV equation) has been established in \cite{Bakas-1992}. Furthermore, reduction to the Kadomtsev-Petviashvili (KP) hierarchy are possible by allowing the gauge potentials to be operator-valued \cite{Schiff-1992}. Ablowitz et al. discussed the derivation of integrable hierarchies from the ASDYM hierarchy, providing examples such as the $N$-wave system, and the KdV, NLS, Davey-Stewartson (DS), and KP equations \cite{Ablowitz-1993}. Ge et al. derived reductions to the sine-Gordon and Liouville equations \cite{Ge-1994}.

In addition to the KP and DS equations, the ASDYM equation admits reductions to other (2+1)-dimensional equations,
including the modified KP (mKP) equation, a (2+1)-dimensional Gardner equation, a generalized DS equation \cite{Chakravarty-1995}, and the Calogero-Bogoyavlenskii-Schiff (CBS) equation \cite{Schiff-1992-2}.
Furthermore, the Ward's chiral model in (2+1) dimensions \cite{Ward-1988,Ward-1988-2} and a (2+1)-d relativistic-invariant field theory model \cite{Manakov-1981} are obtainable from the $J$-matrix formulation;
a hierarchy of (2+1)-d NLS equation and its relationship with the ASDYM equation have been discussed in \cite{Kakei-2002}.

For one-dimensional equations, the Painlev\'e equations arise through the reductions of conformal symmetries
\cite{Mason-Book,Mason-1993}. Besides, B\"acklund transformations of the ASDYM equation can be reduced to obtain some discrete integrable systems \cite{Benincasa-thesis}.
Moreover, Ward's conjecture also extends to non-commutative space \cite{Hamanaka-2006}. Comprehensive reviews of the ASDYM reductions can be found in \cite{Mason-Book, Ablowitz-2003}, along with their extensive references.

A recent development is the formulation of the Cauchy matrix schemes of the ASDYM equations \cite{LSS-2022, LSS-2023, LSS-2025-2}, which facilitate connections between the ASDYM equations and other lower-dimensional integrable systems. As a result, a reduction from the ASDYM matrix formulations to the Fokas-Lenells (FL) equation was found in \cite{LSS-2024}. The FL equation is the first negative member of the potential Kaup-Newell (KN) hierarchy \cite{Fokas-1995,Lenells-2009,Lenells-2009-2},
which is commonly known as the (unreduced) derivative nonlinear Schr\"odinger (DNLS) hierarchy.

The reduction from the ASDYM to the FL equation in \cite{LSS-2024}
suggests the existence of further hidden structures that can be exploited for reductions.
In this paper, we aim to reveal these structures
and establish the connection between the ASDYM hierarchy and other integrable hierarchies,
especially the Ablowitz-Kaup-Newell-Segur (AKNS) hierarchy,
the unreduced DNLS hierarchies
(including the KN, Chen-Lee-Liu (CLL),  Gerdjikov-Ivanov (GI) and a generalized DNLS),
and the KP and mKP hierarchies.
We will make use of the $J$-matrix formulation and $K$-matrix formulation of the ASDYM
hierarchy (see equation \eqref{J-matrix-formulation} and \eqref{K-matrix-formulation}).
Here we specially mention that
by restricting the gauge group to GL(2),
the KP hierarchy and mKP hierarchy are respectively defined by
the $(1,1)$-th entries of $K$ and $J$, which will be explained in Sec.\ref{sec-4}.
It has long been believed that the KP equations cannot be derived from the ASDYM equations via reduction unless operator-valued gauge fields are permitted \cite{Schiff-1992-2,Ablowitz-2003}.
In contrast, we provide an operator-free reduction to the KP and mKP hierarchies from the ASDYM equations.

This paper is organized as follows: In Section \ref{sec-2}, we briefly review the ASDYM hierarchy and several classical integrable hierarchies.
The transformations between these hierarchies, realized within the ASDYM framework, are established in Section \ref{sec-3}.
 In Section \ref{sec-4}, we summarize the results for the GL(2) case, where we discuss the Lax pairs of the AKNS hierarchy, the reduction to the KP and mKP hierarchies, and the bilinear transformation of the generalized DNLS system.
For the completeness of this paper,
we provide the procedure of constructing exact solutions in Section \ref{sec-5},
and show how ASDYM solutions map  to those of the reduced systems.
The final part is devoted to the concluding remarks.

\section{Preliminary: hierarchy structures of some integrable systems}\label{sec-2}

In this section, we will review the hierarchy structures of the ASDYM equations
and several representative integrable equations.

\subsection{The ASDYM equations}\label{sec-2-1}

In this paper, we mainly consider the ASDYM hierarchy of the following form.
Let $\mathfrak{g}$ be the Lie algebra of a Lie group $G$.
We assume the following zero-curvature system holds \cite{Takasaki-1990,Nakamura-1988}:
\begin{align}\label{zero-curvature-system}
	[\partial_{n+1}-\lambda\partial_{n}+A_{n+1},\partial_{m+1}-\lambda\partial_{m}+A_{m+1}]=0,
\end{align}
i.e.,
\begin{subequations}\label{zero-curvature-system-new}
	\begin{align}
		\label{Lax-1}&\partial_mA_{n+1}-\partial_nA_{m+1}=0, \\
		\label{Lax-2}&\partial_{m+1}A_{n+1}-\partial_{n+1}A_{m+1}-[A_{n+1},A_{m+1}]=0,
	\end{align}
\end{subequations}
which is the compatible condition of the associated linear system:
\begin{align}
	(\partial_{n+1}-\lambda\partial_{n}+A_{n+1})\Psi=0,~~~~n\in\mathbb Z,
\end{align}
or alternatively
\begin{subequations}\label{ASDYM-linear-system}
	\begin{align}
		(\partial_n-\lambda^n\partial_0+\sum_{i=0}^{n-1}\lambda^iA_{n-i})\Psi=0,~~~~n\in\mathbb Z^+,\\
		(\partial_n-\lambda^n\partial_0-\sum_{i=n}^{-1}\lambda^iA_{n-i})\Psi=0,~~~~n\in\mathbb Z^-.
	\end{align}
\end{subequations}
Here $\lambda$ is the spectral parameter, $\Psi=\Psi(\lambda)$ is the eigenfunction and $A_{n+1} \in \mathfrak{g}$ with $n\in\mathbb Z$ are gauge potentials.
This system depends on a set of infinite independent complex
variables $\mathbf t:=\{t_{n}\}$. We denote the corresponding partial derivatives by $\partial_{n}:=\partial/\partial t_{n}$.
The gauge potentials $A_{n+1}$ can be expressed in terms of either $J$-matrix or $K$-matrix via the definitions:
\begin{align}\label{gauge-potential}
	A_{n+1}:=-(\partial_{n+1}J)J^{-1}=\partial_{n}K.
\end{align}
Substituting this expression into the zero-curvature system \eqref{zero-curvature-system-new} yields the following  two matrix formulations of the ASDYM hierarchy:
\begin{itemize}
	\item $J$-matrix formulation (the Yang equation \cite{Yang-1977,Pohlmeyer-1980}):
	\begin{align}\label{J-matrix-formulation}
		\partial_{m}((\partial_{n+1}J)J^{-1})-\partial_{n}((\partial_{m+1}J)J^{-1})=0.
	\end{align}
	\item $K$-matrix formulation (the Chalmers-Siegel equation \cite{Parkes-1992,Chalmers-1996}):
	\begin{align}\label{K-matrix-formulation}
		\partial_{m+1}\partial_{n}K-\partial_{n+1}\partial_{m}K-[\partial_{n}K,\partial_{m}K]=0.
	\end{align}
\end{itemize}

\begin{remark}\label{Rem-0}
The compatible expression \eqref{gauge-potential} for the gauge potentials $\{A_j\}$
is important in our research
because  both the $J$-matrix formulation  \eqref{J-matrix-formulation} and
the $K$-matrix formulation  \eqref{K-matrix-formulation}
are the consequences of \eqref{gauge-potential}.
In some reductions, we need to start from the gauge potential expression \eqref{gauge-potential}
rather than the above two equations.
We will see the advantage in so doing.
\end{remark}

\subsection{The  AKNS equations}\label{sec-2-2}

The well-known NLS equation is one of the most important integrable equations,
arising in a wide range of  physical contexts \cite{Zakharov-1972}:
\begin{align}\label{NLS}
	\mathrm iu_\tau+u_{\zeta\zeta}+2\delta|u|^2u=0,~~~~\delta=\pm1.
\end{align}
Here $\mathrm{i}^2=-1$, $|u|^2=u\bar u$
where bar represents complex conjugate,
and the parameter $\delta$ distinguishes between the focusing ($\delta=1$) and defocusing ($\delta=-1$) cases of the nonlinearity.

Now, let us forget about the complex structure of \eqref{NLS} and consider the following system (the NLS equation is recovered by introducing $(\zeta,\tau):=(\mathrm ix,\mathrm it_2)$ as real coordinates and requiring $u:=r=\delta \bar q$):
\begin{subequations}\label{NLS-couple}
	\begin{align}
		&r_{t_2}=r_{xx}-2qr^2, \\
		&q_{t_2}=-q_{xx}+2q^2r,
	\end{align}
\end{subequations}
which is known as the unreduced NLS system,
belonging to the positive   AKNS hierarchy
(see \cite{AKNS-1974} or Chapter 3.3 of \cite{CDY-book}):
\begin{align}\label{AKNS-h}
	\begin{pmatrix}
		r \\
		q
	\end{pmatrix}_{t_{n}}
	=R^n H_0, ~~~~
	R:=
	\begin{pmatrix}
		\partial_x-2r\partial^{-1}_x q & -2r\partial^{-1}_x r \\
		2q\partial^{-1}_x q & -\partial_x +2q\partial^{-1}_x r
	\end{pmatrix},
	~~~~ H_0=\begin{pmatrix}
		r \\
		-q
	\end{pmatrix},
\end{align}
where $\partial^{-1}_x$ is the integration operator defined as $\partial_x^{-1}f=\frac{1}{2}(\int^x_{-\infty}-\int_{x}^{\infty})f(x')\mathrm dx'$, and $n\in\mathbb Z$.
The recursion operator $R$ can be also rewritten as
\begin{align}
	R:=
	\sigma_3\partial_x+2
	\begin{pmatrix}
		-r \\ q
	\end{pmatrix}\partial_x^{-1}\big(q~\, r\big),~~~~\sigma_3:=\mathrm{Diag}(1,-1).
\end{align}
This hierarchy can be alternatively expressed in a recursive form
\begin{align}\label{AKNS-hn}
	\begin{pmatrix}
		r \\
		q
	\end{pmatrix}_{t_{n+1}}
	=R \begin{pmatrix}
		r \\
		q
	\end{pmatrix}_{t_{n}}, ~~~~
	\begin{pmatrix}
		r \\
		q
	\end{pmatrix}_{t_{0}}=\begin{pmatrix}
		r \\
		-q
	\end{pmatrix},
\end{align}
which expands to the following system:
\begin{subequations}\label{AKNS}
	\begin{align}
		\label{AKNS-1}&r_{t_{n+1}}=r_{x,t_n}-2r\partial_x^{-1}(qr)_{t_n},~~~~r_{t_0}=r, \\
		\label{AKNS-2}&q_{t_{n+1}}=-q_{x,t_n}+2q\partial_x^{-1}(rq)_{t_n},~~~~q_{t_0}=-q.
	\end{align}
\end{subequations}

\subsection{The DNLS equations and gauge transformations}\label{sec-2-3}

The NLS-type equations with derivative nonlinearities are referred to as
the derivative nonlinear Schr\"odinger (DNLS) equations.
One example of the DNLS equations, proposed by Gerdjikov and Ivanov (GI), is presented as \cite{Gerdjikov-1983}:
\begin{align}\label{GI-equation}
	\mathrm iu_\tau+u_{\zeta\zeta}+2\mathrm i\delta u^2\bar u_\zeta+2|u|^4u=0.
\end{align}
The  unreduced GI system reads
\begin{subequations}
	\begin{align}
		&p_{t_{2}}=p_{xx}-2p^2q_x-2p^3q^2,\\
		&q_{t_{2}}=-q_{xx}-2q^2p_x+2p^2q^3,
	\end{align}
\end{subequations}
which yields the GI equation \eqref{GI-equation} by
taking $(\zeta,\tau):=(\mathrm ix,\mathrm it_2)$ as real coordinates
and the conjugate reduction $u:=p=\delta \bar q$.
The above system is the first nonlinear positive member of the GI hierarchy:
\begin{align}\label{GI-hierarchy}
	\begin{pmatrix}
		p \\
		q
	\end{pmatrix}_{t_{n+1}}=\mathcal R
	\begin{pmatrix}
		p \\
		q
	\end{pmatrix}_{t_{n}},~~~~
	\begin{pmatrix}
		p \\
		q
	\end{pmatrix}_{t_{0}}=
	\begin{pmatrix}
		p \\
		-q
	\end{pmatrix},
\end{align}
where $\mathcal R$ is the recursion operator \cite{Fan-2000}:
\begin{align}\label{GI-recursion}
	\mathcal R:=
	\begin{pmatrix}
		\partial_x - 2p \partial^{-1}_x (q_x + 2pq^2) & -2p^2 + 2p \partial^{-1}_x (p_x - 2p^2q) \\
		-2q^2 + 2q \partial^{-1}_x (q_x + 2pq^2) & -\partial_x - 2q \partial^{-1}_x (p_x - 2p^2q)
	\end{pmatrix}.
\end{align}
The expansion of \eqref{GI-hierarchy} yields
\begin{subequations}\label{GI}
	\begin{align}
		\label{GI-1}&p_{t_{n+1}}=p_{x,t_n}-2p\partial^{-1}_x(pq_x)_{t_n}-2p\partial^{-1}_x(p^2q^2)_{t_n},
~~~~p_{t_0}=p, \\
		\label{GI-2}&q_{t_{n+1}}=-q_{x,t_n}-2q\partial^{-1}_x(p_xq)_{t_n}+2q\partial^{-1}_x(p^2q^2)_{t_n},
~~~~q_{t_0}=-q.
	\end{align}
\end{subequations}

\begin{remark}\label{Rem-1}
We will show that the AKNS hierarchy \eqref{AKNS} and the GI hierarchy \eqref{GI}
share the same $q$-function in our construction.
In fact, there is a Riccati-type Miura transformation between the two systems (cf. \cite{Kakei-2004}):
	\begin{align}\label{Riccati-Miura-transformation}
		r=-p_x+p^2q.
	\end{align}
The equation \eqref{GI-2} is a direct result of \eqref{AKNS-2} in light of \eqref{Riccati-Miura-transformation}.
Since the $t_{n+1}$-derivative of $r$ can be represented as
	\begin{align}
		r_{t_{n+1}}=(2pq-\partial_x)p_{t_{n+1}}+p^2 q_{t_{n+1}},
	\end{align}
we are led to \eqref{GI-1} by substituting \eqref{Riccati-Miura-transformation} into \eqref{AKNS-1}
and utilizing \eqref{GI-2}.	
This relation also facilitated the investigations on the FL equation from the AKNS negative flow \cite{Ye-2023,Muller-Hoissen-2025}, which was recently realized in the ASDYM reduction \cite{LSS-2024}.
\end{remark}

The DNLS equations are of three types, other than the DNLS equation of GI-type \eqref{GI-equation},
there are the Chen-Lee-Liu (CLL) DNLS equation \cite{Chen-1979}:
\begin{align}\label{CLL-equation}
	\mathrm iu_{\tau}+u_{\zeta\zeta}-2\mathrm i\delta|u|^2u_\zeta=0,
\end{align}
and the Kaup-Newell (KN) DNLS equation \cite{Kaup-1978}:
\begin{align}\label{KN-equation}
	\mathrm iu_{\tau}+u_{\zeta\zeta}-2\mathrm i\delta(|u|^2u)_\zeta=0.
\end{align}
The three models are all integrable and related via gauge transformations \cite{Wadati-1982}.
In fact, the method of gauge transformation can be further applied to derive a generalized derivative nonlinear Schr\"odinger (GDNLS) equation. Through the transformation (here we use $u_{\text{[GDNLS]}}$ and $u_{\text{[GI]}}$ to distinguish the variables in different DNLS equations):
\begin{align}\label{gauge-GI-GDNLS}
	u_{\text{[GDNLS]}}=u_{\text{[GI]}}\exp\left(\mathrm i\gamma\delta\int\left|u\right|^2_{\text{[GI]}}\mathrm d\zeta\right),~~~~\gamma\in\mathbb C,
\end{align}
one obtains the GDNLS equation (here we denote $u=u_{\text{[GDNLS]}}$) \cite{Kundu-1984,Kundu-1987,Clarkson-1987}:
\begin{align}\label{GDNLS-equation}
	\mathrm iu_\tau+u_{\zeta\zeta}-2\mathrm i\gamma\delta |u|^2u_\zeta
-2\mathrm i(\gamma-1)\delta u^2\bar u_\zeta+(\gamma-1)(\gamma-2)|u|^4u=0.
\end{align}
It becomes the GI equation \eqref{GI-equation} when $\gamma=0$,
the CLL equation \eqref{CLL-equation} when $\gamma=1$
and the KN equation \eqref{KN-equation} when $\gamma=2$.
Nevertheless, these three equations are typically investigated independently, as the gauge transformations linking them involve complex integrations that become difficult to compute in multi-soliton case.

The hierarchy structure of \eqref{GDNLS-equation} can be investigated by applying the gauge transformation to \eqref{GI}. Inspired from \eqref{gauge-GI-GDNLS}, the gauge factor is introduced as
\begin{align}\label{gauge-factor}
	s=\exp(-\partial_x^{-1}(pq)),
\end{align}
which satisfies the following dynamics
\begin{subequations}\label{s-dynamics}
	\begin{align}
		&\frac{s_x}{s}=-pq, \label{s-dynamics-x}  \\
		& \frac{s_{t_n}}{s}=-\partial^{-1}_x(pq)_{t_n}, \\
		&\frac{s_{t_{n+1}}}{s}=-\partial_x^{-1}(p_xq)_{t_n}+\partial_x^{-1}(p^2q^2)_{t_n}+pq_{t_n}.
	\end{align}
\end{subequations}
The new variables are introduced as
\begin{equation}\label{uv}
(u,v):=(s^{\gamma}p,s^{-\gamma}q),
\end{equation}
which transforms the GI hierarchy \eqref{GI} to the GDNLS hierarchy
\begin{subequations}\label{GDNLS}
	\begin{align}
		u_{t_{n+1}}&=\,u_{x,t_n}+\gamma u\partial^{-1}_x(u_xv)_{t_n}
+\gamma u_x\partial^{-1}_x(uv)_{t_n} \notag \\
		\label{GDNLS-1}&~~~~~~~~~~+2(\gamma-1)u\partial^{-1}_x(uv_x)_{t_n}
-(\gamma-1)(\gamma-2)u\partial^{-1}_x(u^2v^2)_{t_n}, \\
		v_{t_{n+1}}&=-v_{x,t_n}+\gamma v\partial^{-1}_x(uv_x)_{t_n}
+\gamma v_z\partial^{-1}_x(uv)_{t_n} \notag \\
		\label{GDNLS-2}&\,~~~~~~~~~~~+2(\gamma-1)v\partial^{-1}_x(u_xv)_{t_n}
+(\gamma-1)(\gamma-2)v\partial^{-1}_x(u^2v^2)_{t_n},
	\end{align}
\end{subequations}
where $u_{t_0}=u$ and $v_{t_0}=-v$. The GDNLS hierarchy
with recursive coordinates can be generated by the action of a recursion operator $\mathcal M$
on the preceding flow
\begin{align}\label{GDNLS-recursion}
	\begin{pmatrix}
		u \\
		v
	\end{pmatrix}_{t_{n+1}}=\mathcal M
	\begin{pmatrix}
		u \\
		v
	\end{pmatrix}_{t_n},~~~~
	\begin{pmatrix}
		u \\
		v
	\end{pmatrix}_{t_0}=
	\begin{pmatrix}
		u \\
		-v
	\end{pmatrix},
\end{align}
where the recursion operator $\mathcal M=\{\mathcal M_{ij}\}$ is given by
\begin{subequations}
	\begin{align}
		\mathcal M_{11} &= \partial_x + \gamma uv + (\gamma-2)u\partial^{-1}_x v_x
+ \gamma u_x \partial^{-1}_x v - 2(\gamma-1)(\gamma-2) u\partial^{-1}_x uv^2, \\
		\mathcal M_{12} &= 2(\gamma-1)u^2 - (\gamma-2)u\partial^{-1}_x u_x
+ \gamma u_x \partial^{-1}_x u - 2(\gamma-1)(\gamma-2) u\partial^{-1}_x u^2v, \\
		\mathcal M_{21} &= 2(\gamma-1)v^2 - (\gamma-2)v\partial^{-1}_x v_x
+ \gamma v_x \partial^{-1}_x v + 2(\gamma-1)(\gamma-2) v\partial^{-1}_x uv^2, \\
		\mathcal M_{22} &= -\partial_x + \gamma uv + (\gamma-2)v\partial^{-1}_x u_x
+ \gamma v_x \partial^{-1}_x u + 2(\gamma-1)(\gamma-2) v\partial^{-1}_x u^2v.
	\end{align}
\end{subequations}

As a direct consequence of \eqref{GDNLS} and \eqref{GDNLS-recursion},
the recursive formulations and recursion operators for the CLL and KN hierarchies are provided without proof.
Taking $\gamma=1$ in \eqref{GDNLS} and define
\begin{equation}\label{pq-tilde}
(\tilde p,\tilde q):=(u|_{\gamma=1},v|_{\gamma=1})=(sp,s^{-1}q),
\end{equation}
it gives rise to the CLL hierarchy
\begin{subequations}\label{CLL}
	\begin{align}
		\label{CLL-1}&\tilde p_{t_{n+1}}=\tilde p_{x,t_n}+\tilde p\partial_x^{-1}(\tilde p_x\tilde q)_{t_n}
+\tilde p_x\partial_x^{-1}(\tilde p\tilde q)_{t_n}, ~~~~ \tilde p_{t_{0}}=\tilde p,\\
		\label{CLL-2}&\tilde q_{t_{n+1}}=-\tilde q_{x,t_n}+\tilde q\partial_x^{-1}(\tilde p\tilde q_x)_{t_n}
+\tilde q_x\partial_x^{-1}(\tilde p\tilde q)_{t_n}, ~~~~ \tilde q_{t_{0}}=-\tilde q.
	\end{align}
\end{subequations}
The recursion operator for CLL hierarchy is presented as
\begin{align}\label{CLL-recursion}
	\widetilde{\mathcal R}:=
	\begin{pmatrix}
		\partial_x+\tilde p\tilde q+\tilde p_x\partial_x^{-1}\tilde q-\tilde p\partial_x^{-1}\tilde q_x &
\tilde p_x\partial_x^{-1}\tilde p+\tilde p\partial_x^{-1}\tilde p_x \\
		\tilde q_x\partial_x^{-1}\tilde q+\tilde q\partial_x^{-1}\tilde q_x & -\partial_x
+\tilde p\tilde q+q_x\partial^{-1}_x\tilde p-\tilde q\partial_x^{-1}\tilde p_x
	\end{pmatrix}.
\end{align}
Taking
\begin{equation}\label{pq-hat}
(\hat p,\hat q):=(u|_{\gamma=2},v|_{\gamma=2})=(s^2p,s^{-2}q),
\end{equation}
it yields the KN hierarchy
\begin{subequations}\label{KN}
\begin{align}
&\hat p_{t_{n+1}}=\hat p_{x,t_n}+2(\hat p\partial_x^{-1}(\hat p\hat q)_{t_n})_x, ~~~~ \hat p_{t_{0}}=\hat p,\\
&\hat q_{t_{n+1}}=-\hat q_{x,t_n}+2(\hat q\partial_x^{-1}(\hat p\hat q)_{t_n})_x, ~~~~\hat q_{t_{0}}=-\hat q,
\end{align}
\end{subequations}
with the associated recursion operator
\begin{align}
	\widehat{\mathcal R}:=
	\begin{pmatrix}
		\partial_x+2\hat p\hat q+2\hat p_x\partial^{-1}_x\hat q & 2\hat p^2+2\hat p_x\partial^{-1}_x\hat p \\
		2\hat q^2+2\hat q_x\partial^{-1}_x\hat q & -\partial_x+2\hat p\hat q+2\hat q_x\partial^{-1}_x\hat p
	\end{pmatrix}.
\end{align}
The recursion operator of KN hierarchy can also be written as
\begin{align}
	\widehat{\mathcal R}:=
	\begin{pmatrix}
		\partial_x+2\partial_x(\hat p\partial_x^{-1}\hat{q}) & 2\partial_x(\hat p\partial_x^{-1}\hat{p}) \\
		2\partial_x(\hat q\partial_x^{-1}\hat{q}) & -\partial_x+2\partial_x(\hat q\partial_x^{-1}\hat{p})
	\end{pmatrix},
\end{align}
or alternatively
\begin{align}
	\widehat{\mathcal R}:=
	\sigma_3\partial_x+2\partial_x
	\begin{pmatrix}
		\hat p \\ \hat q
	\end{pmatrix}\partial_x^{-1}\big(\hat q~\,\hat p\big),~~~~\sigma_3:=\mathrm{Diag}(1,-1).
\end{align}

\subsection{The KP hierarchy and squared eigenfunction symmetry constraint }\label{sec-2-4}

The integrability of the KP hierarchy can be described through utilizing the pseudo-differential operator \cite{Sato-1981,Date-1982,Ohta-1988}. It is formulated as the compatible condition of the linear system
\begin{subequations}\label{KP-linear}
	\begin{align}
		&L\phi=\lambda\phi,~~~~L:=\partial_x+u_2\partial^{-1}_x+u_3\partial^{-2}_x+\cdots, \\
		&\phi_{t_m}=B_m\phi,~~~~B_m:=(L^m)_{\geq0},~~~~m\geq1.
	\end{align}
\end{subequations}
The compatible condition of \eqref{KP-linear} indicates the Lax representation of KP hierarchy
\begin{subequations}
	\begin{align}
		&\partial_{t_m}L=[B_m,L], \\
		\label{ZS-KP}&\partial_{t_m}B_n-\partial_{t_n}B_m+[B_n,B_m]=0,
	\end{align}
\end{subequations}
which gives rise to the KP hierarchy (here we denote $\hat u=u_2$)
\begin{subequations}\label{KP-hierarchy}
	\begin{align}
		&\hat u_{t_1}=K_1=\hat u_x,\\
		&\hat u_{t_2}=K_2=\hat u_y, \\
		&\hat u_{t_3}=K_3=\frac{1}{4}\hat u_{xxx}+3\hat u\hat u_x+\frac{3}{4}\partial_x^{-1}\hat u_{yy}, \label{KP-eq}\\
		& ~~~~\cdots\cdots \nonumber
	\end{align}
\end{subequations}

The linear system \eqref{KP-linear} and its adjoint form are closely related to the AKNS hierarchy \eqref{AKNS-h}
under the squared eigenfunction symmetry constraint. We explain it by introducing the inner product
\begin{align}
	\langle f(x),g(x)\rangle:=\int_{-\infty}^{\infty}f(x)g(x)\mathrm dx,
\end{align}
where $f(x)$ and $g(x)$ tend to zero fast enough when $|x|\to \infty$.
It follows that  the adjoint operator of $O$, denoted as $O^*$, is defined by
\begin{align}
	\langle f(x),O g(x)\rangle=\langle O^*f(x),g(x)\rangle.
\end{align}
Thus, apart from \eqref{KP-linear}, one can alternatively consider its adjoint system
\begin{align}\label{dual-system}
	L^*\phi^*=\lambda\phi^*,~~~~	\phi^*_{t_m}=-B_m^*\phi^*,
\end{align}
where we use $\phi^*$ to denote the eigenfunction of \eqref{dual-system}. It can be proved that $(\phi\phi^*)_x$
is a symmetry of the KP hierarchy, conventionally called the squared eigenfunction symmetry \cite{Oevel-1993,Matsukidaira-1990}. Since $\hat u_x$ is also a symmetry of the KP hierarchy,
one may consider a symmetry constraint $\hat u_x+(\phi\phi^*)_x=0$, which indicates that
\begin{align}\label{u-qr}
	\hat u=-\phi\phi^*=-rq,~~~~(r,q):=(\phi,\phi^*).
\end{align}
It turns out that under the above constraint, the pseudo-differential operator $L$ can be represented as
\begin{align}
	L=\partial_x-r\partial_x^{-1}q.
\end{align}
This leads the linear system $\phi_{t_n}=B_n\phi$ and $\phi^*_{t_n}=-B_n^*\phi^*$ to a constrained form
\begin{subequations}
	\begin{align}
		&r_{t_n}=B_nr=\left((\partial_x-r\partial_x^{-1}q)^n\right)_{\geq0}r, \\
		&q_{t_n}=-B_n^*q=-\left((-\partial_x+q\partial_x^{-1}r)^n\right)_{\geq0}q,
	\end{align}
\end{subequations}
which was proved to be the positive AKNS hierarchy \eqref{AKNS-h} with $n\geq1$
\cite{KSS-PLA-1991,CL-PLA-1991,Xu-1992,CL-JPA-1992}.
Based on the above connection, once we have a solution pair $(r,q)$ of the AKNS hierarchy with $n\geq1$,
then $\hat u=-qr$ provides a solution for the KP hierarchy \eqref{KP-hierarchy}.

\begin{remark}\label{Rem-2}
	Based on the Miura transformation \eqref{Riccati-Miura-transformation} and the gauge transformation \eqref{gauge-factor}, the KP solution can be provided by (see also in \cite{CL-JPA-1992,Dimakis-2006})
	\begin{align}
		\hat u=\tilde p_x\tilde q=p_xq-p^2q^2=-qr,
	\end{align}
	where $(p,q)$ satisfy the GI hierarchy \eqref{GI}
and $(\tilde p,\tilde q)$ satisfy the CLL hierarchy \eqref{CLL}.
\end{remark}

\subsection{The mKP hierarchy and squared eigenfunction symmetry constraint }\label{sec-2-5}

Likewise, there is a similar result for the mKP hierarchy.
The mKP hierarchy is governed by the linear system
\begin{subequations}\label{mKP-linear}
	\begin{align}
		&\mathcal L\psi=\lambda\psi,~~~~\mathcal L:=\partial_x+v_0+v_1\partial_x^{-1}+\cdots, \\
		&\psi_{t_m}=\mathcal B_m\psi,~~~~\mathcal B_m:=(\mathcal L^m)_{\geq1},~~~~m\geq1,
	\end{align}
\end{subequations}
where the compatibility  gives rise to the mKP hierarchy (here we denote $\hat v=v_0$)
\begin{subequations}\label{mKP-hierarchy}
	\begin{align}
		&\hat v_{t_1}=\mathcal K_1=\hat v_x,\\
		&\hat v_{t_2}=\mathcal K_2=\hat v_y, \\
		\label{mKP}&\hat v_{t_3}=\mathcal K_3=\frac{1}{4}\hat v_{xxx}-\frac{3}{2}\hat v^2\hat v_x
+\frac{3}{2}\hat v_x\partial^{-1}_x\hat v_y+\frac{3}{4}\partial^{-1}_x\hat v_{yy}, \\
		& ~~~~\cdots\cdots \nonumber
	\end{align}
\end{subequations}
Equation \eqref{mKP} is known as the mKP equation.
The adjoint system of \eqref{mKP-linear} is given by
\begin{align}
	\mathcal L^*\psi^*=\lambda\psi^*,~~~~\psi^*_{t_m}=-\mathcal B^*_m\psi^*,
\end{align}
where $\psi^*$ is the associated eigenfunction.
By imposing the symmetry constraint \cite{CL-JPA-1992}
\begin{align}\label{v-sym}
	\hat v=\psi\psi^*=\tilde p\tilde q,~~~~(\tilde p,\tilde q):=(\psi,\psi^*),
\end{align}
the operator $\mathcal L$ can be represented as
\begin{align}
	\mathcal L=\partial_x+\tilde p\partial_x^{-1}\tilde q\partial_x,
\end{align}
and meanwhile, the coupled system $\psi_{t_n}=\mathcal B_n\psi$ and $\psi^*_{t_n}
=-\mathcal B_n^*\psi^*$
is restricted to \cite{Chen-2002}
\begin{subequations}
	\begin{align}
		&\tilde p_{t_n}=\mathcal B_n\tilde p=\left((\partial_x+\tilde p\partial_x^{-1}\tilde q\partial_x)^n\right)_{\geq1}\tilde p, \\
		&\tilde q_{t_n}=-\mathcal B_n^*\tilde q=-\left((-\partial_x+\partial_x\tilde q\partial_x^{-1}
\tilde p)^n\right)_{\geq1}\tilde q.
	\end{align}
\end{subequations}
This gives rise to the CLL hierarchy (for $n\geq1$)
\begin{align}\label{CLL-h}
	\begin{pmatrix}
		\tilde p \\
		\tilde q
	\end{pmatrix}_{t_{n}}
	=\widetilde{\mathcal R}^n \widetilde{\mathcal H}_0, ~~~~\widetilde{\mathcal H}_0=
	\begin{pmatrix}
		\tilde p \\
		-\tilde q
	\end{pmatrix},
\end{align}
where the recursion operator $\widetilde{\mathcal R}$ is given by  \eqref{CLL-recursion}.
This fact indicates that once we have a solution pair $(\tilde p,\tilde q)$ of the CLL hierarchy with $n\geq1$,
then $\hat v=\tilde p\tilde q$  yields a solution for the mKP hierarchy \eqref{mKP-hierarchy}.

\section{The anti-self-dual Yang-Mills reduction}\label{sec-3}
In the previous section, we briefly reviewed the hierarchy structures for the ASDYM equation
along with several classical integrable systems.
Though these hierarchies have many closed connections, they are often treated separately.
For example, the Riccati-type Miura transformation \eqref{Riccati-Miura-transformation} reveals the connection
between the AKNS hierarchy and the GI hierarchy,
but it is difficult to determine $p$ from given $(r,q)$ by using \eqref{Riccati-Miura-transformation}.
On the other hand, once we have a solution of the GI hierarchy,
the associated solutions for the GDNLS hierarchy can be calculated in principle
by using the gauge transformation \eqref{gauge-GI-GDNLS}.
However, the gauge factor in \eqref{gauge-factor} involves an integration,
which makes the calculation complicated in multi-soliton case.
In this section, we demonstrate how the AKNS and DNLS hierarchies emerge from
the gauge potential expression
\eqref{gauge-potential} and the $K$-matrix formulation \eqref{K-matrix-formulation}.
As a result, these complex transformations find a perfect realization in the same framework of the ASDYM reduction.

\subsection{Dimensional reduction condition}\label{sec-3-1}

The ASDYM equations are equipped with a group structure, which is usually characterized by
$G=\mathrm{GL}(M,\mathbb C)$, where $M\geq2$.
Here we assume $M=m_1+m_2$ and partition the $J$-matrix and $K$-matrix as $2\times2$ block matrices:
\begin{align}\label{matrix-block}
	J=
	\begin{pmatrix}
		(J_{11})_{m_1\times m_1} & (J_{12})_{m_1\times m_2} \\
		(J_{21})_{m_2\times m_1} & (J_{22})_{m_2\times m_2}
	\end{pmatrix},~~~~
	K=
	\begin{pmatrix}
		(K_{11})_{m_1\times m_1} & (K_{12})_{m_1\times m_2} \\
		(K_{21})_{m_2\times m_1} & (K_{22})_{m_2\times m_2}
	\end{pmatrix}.
\end{align}
In this sense, the expansion \eqref{gauge-potential} indicates  $\partial_{n+1}J=-(\partial_nK)J$,
which yields
\begin{subequations}\label{J-tn+1}
	\begin{align}
		\partial_{n+1}J_{11}=-(\partial_{n}K_{11})J_{11}-(\partial_{n}K_{12})J_{21}, \\
		\partial_{n+1}J_{12}=-(\partial_{n}K_{11})J_{12}-(\partial_{n}K_{12})J_{22}, \\
		\partial_{n+1}J_{21}=-(\partial_{n}K_{21})J_{11}-(\partial_{n}K_{22})J_{21}, \\
		\partial_{n+1}J_{22}=-(\partial_{n}K_{21})J_{12}-(\partial_{n}K_{22})J_{22}.
	\end{align}
\end{subequations}

The dimensional reduction is imposed by assuming the following constraint on $K$-matrix:
\begin{align}\label{dimensional-reduction}
	\partial_{0}K=[K,E],
\end{align}
where
\begin{align}\label{3.4}
E:=\frac{1}{m_1+m_2}
	\begin{pmatrix}
		m_2(I_{m_1\times m_1}) & 0 \\
		0 & -m_1(I_{m_2\times m_2})
	\end{pmatrix}.
\end{align}
Explicitly, this implies the expansion of $\partial_{0} K$ and $\partial_{1}J=-(\partial_0K)J$ gives rise to
\begin{align}\label{K-t0}
	\partial_{0}K=
	\begin{pmatrix}
		0 & -K_{12} \\
		K_{21} & 0
	\end{pmatrix},
\end{align}
and
\begin{align}\label{J-t1}
	\partial_1 J=
	\begin{pmatrix}
		K_{12}J_{21} & K_{12}J_{22} \\
		-K_{21}J_{11} & -K_{21}J_{12}
	\end{pmatrix}.
\end{align}
Under the reduction \eqref{dimensional-reduction},
the $K$-matrix formulation \eqref{K-matrix-formulation} with $m=0$ simplifies to
\begin{align}\label{3.5}
	\partial_{1}\partial_{n}K-[\partial_{n+1}K,E]-[\partial_{n }K,[K,E]]=0.
\end{align}
Expressing this equation in terms of each entry of $K$, we obtain the following system
\begin{subequations}\label{K-system}
	\begin{align}
		&\partial_1K_{11}=K_{12}K_{21},~~~~\partial_1K_{22}=-K_{21}K_{12}, \label{K-system-a}\\
		&\partial_1\partial_{n}K_{12}=
-\partial_{n+1}K_{12}-(\partial_{n}K_{11})K_{12}+K_{12}(\partial_{n}K_{22}),\label{K-system-b} \\
		&\partial_1\partial_{n}K_{21}=\partial_{n+1}K_{21}+(\partial_{n}K_{22})K_{21}-K_{21}(\partial_{n}K_{11}).
\label{K-system-c}
	\end{align}
\end{subequations}
Furthermore, by assuming $J_{11}$ to be invertible, the $t_1$-derivative of $J_{21}J_{11}^{-1}$ indicates
\begin{align}\label{P-derivetive}
	\partial_1(J_{21}J_{11}^{-1})=-K_{21}-(J_{21}J_{11}^{-1})K_{12}(J_{21}J_{11}^{-1}),
\end{align}
where relations \eqref{J-t1} are used in the calculation.
The two relations, i.e. \eqref{3.5} and \eqref{P-derivetive},
will reveal the construction of the matrix AKNS and GI hierarchies in the GL($M,\mathbb C$) ASDYM reduction framework.
Let us proceed.

\subsection{Reduction to the matrix AKNS and GI hierarchies}\label{sec-3-2}

\begin{theorem}\label{Th-1}
For the GL($M,\mathbb C$) ASDYM hierarchy  \eqref{K-matrix-formulation}
that satisfies the constraint \eqref{dimensional-reduction},
set  $x:=t_1$ as the spacial variable and define
\begin{equation}\label{QR-K}
(R,Q):=(K_{21},-K_{12}).
\end{equation}
Then, $(R,Q)$ satisfy  the recursive representation of the matrix AKNS hierarchy
	\begin{subequations}\label{mc-AKNS}
		\begin{align}
			\label{mc-AKNS-1}&R_{t_{n+1}}=R_{x,t_n}-R\partial_x^{-1}(QR)_{t_n}
-\partial_x^{-1}(RQ)_{t_n}R,~~~~ R_{t_0}=R,\\
			\label{mc-AKNS-2}&Q_{t_{n+1}}=-Q_{x,t_n}+Q\partial_x^{-1}(RQ)_{t_n}
+\partial_x^{-1}(QR)_{t_n}Q,~~~~ Q_{t_0}=-Q,
		\end{align}
	\end{subequations}
where $R$ and $Q^T$ are $m_1\times m_2$ matrix-valued functions.
\end{theorem}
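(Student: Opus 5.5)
Substituting the definitions \eqref{QR-K} into the entrywise system \eqref{K-system} turns the two off-diagonal equations into the two halves of \eqref{mc-AKNS}. The diagonal blocks $K_{11},K_{22}$ drop out once they are expressed through $R$ and $Q$ by integrating in $x$.

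\textbf{Step 1 (initial flows).} The case $n=0$ should come directly from the dimensional reduction \eqref{dimensional-reduction}. Reading off \eqref{K-t0}, we get $\partial_0K_{12}=-K_{12}$ and $\partial_0K_{21}=K_{21}$. Hence $R_{t_0}=R$ and $Q_{t_0}=-Q$, which are the seed conditions in \eqref{mc-AKNS}.

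\textbf{Step 2 (diagonal blocks).} From \eqref{K-system-a}, with $x=t_1$, we have
\[
\partial_xK_{11}=K_{12}K_{21}=-QR,\qquad \partial_xK_{22}=-K_{21}K_{12}=RQ.
\]
Differentiating in $t_n$ and using that $\partial_x$ and $\partial_n$ commute gives $\partial_x(\partial_nK_{11})=-(QR)_{t_n}$ and $\partial_x(\partial_nK_{22})=(RQ)_{t_n}$. Therefore
\[
\partial_nK_{11}=-\partial_x^{-1}(QR)_{t_n},\qquad \partial_nK_{22}=\partial_x^{-1}(RQ)_{t_n}.
\]

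\textbf{Step 3 (off-diagonal equations).} Substitute Step 2 into \eqref{K-system-c}, using $K_{21}=R$:
\[
R_{x,t_n}=R_{t_{n+1}}+\partial_x^{-1}(RQ)_{t_n}\,R+R\,\partial_x^{-1}(QR)_{t_n}.
\]
This rearranges to \eqref{mc-AKNS-1}. Likewise, \eqref{K-system-b} with $K_{12}=-Q$ gives
\[
-Q_{x,t_n}=Q_{t_{n+1}}-\partial_x^{-1}(QR)_{t_n}\,Q-Q\,\partial_x^{-1}(RQ)_{t_n}.
\]
This is \eqref{mc-AKNS-2}. Throughout, the order of the noncommutative factors must be tracked carefully: $R$ is $m_2\times m_1$, so $RQ$ is $m_2\times m_2$ and $QR$ is $m_1\times m_1$. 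The block sizes then match $K_{22}$ and $K_{11}$, which is consistent.

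\textbf{Main obstacle.} The only delicate point is Step 2, namely inverting $\partial_x$. With the symmetric definition of $\partial_x^{-1}$, the identity $\partial_nK_{ii}=\partial_x^{-1}\partial_x(\partial_nK_{ii})$ holds only up to an $x$-independent term. One must argue that this term vanishes: either by imposing decay or normalisation of $K_{11},K_{22}$ as $|x|\to\infty$, or by absorbing it into the definition of $\partial_x^{-1}$ (equivalently, fixing a gauge of $K$, since \eqref{gauge-potential} determines $K$ only up to additive functions independent of $t_1$). I would state this boundary assumption explicitly and then verify that \eqref{3.5} with $m=0$ indeed produces \eqref{K-system} as written, including the signs coming from $[\,\cdot\,,E]$. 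That sign check is the main source of possible error.
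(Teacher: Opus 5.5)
Your proposal is correct and is essentially the paper's proof: express $\partial_nK_{11}$ and $\partial_nK_{22}$ through $\partial_x^{-1}$ using \eqref{K-system-a}, then substitute into \eqref{K-system-b}--\eqref{K-system-c}; your signs, the $t_0$-flows read off from \eqref{K-t0}, and the block sizes (with $R=K_{21}$ being $m_2\times m_1$) all check out. The $x$-independent integration term you flag is a point the paper passes over silently by simply writing $K_{11}=\partial_x^{-1}(K_{12}K_{21})$, and your decay/normalisation assumption is the natural way to settle it.
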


\begin{proof}
	This theorem is a direct result of \eqref{K-system}.
In fact, \eqref{K-system-a} indicates
$K_{11}=\partial_x^{-1}(K_{12}K_{21})$ and $K_{22}=-\partial^{-1}_x(K_{21}K_{12})$.
Inserting them into \eqref{K-system-b} and \eqref{K-system-c}
and expressing the results in terms of $(R,Q)$ using \eqref{QR-K},
we get \eqref{mc-AKNS}.
\end{proof}

The system \eqref{mc-AKNS} appears to be a non-commutative version of \eqref{AKNS}, and can be viewed as the multi-component AKNS hierarchy as well.
For example, taking $n=1$ in \eqref{mc-AKNS} yields a matrix form of \eqref{NLS-couple}
\begin{subequations}\label{mc-NLS-couple}
	\begin{align}
		&R_{t_2}=R_{xx}-2RQR,\\
		&Q_{t_2}=-Q_{xx}+2QRQ.
	\end{align}
\end{subequations}
Defining $(\zeta,\tau):=(\mathrm ix,\mathrm it_2)$ and imposing $U:=R=\delta Q^\dagger$, where `$^\dagger$' denotes the complex conjugate transpose, it gives rise to the matrix NLS equation
\begin{align}
	\mathrm iU_\tau+U_{\zeta\zeta}+2\delta UU^\dagger U=0,~~~~\delta=\pm1.
\end{align}
In particular, when $U$ is a two-dimensional vector, it becomes the Manakov system \cite{Manakov-1974};
when $U$ is a $2\times2$ symmetric matrix, it becomes an integrable model of the spin-1 Gross-Pitaevskii equation that describes spinor Bose-Einstein condensates \cite{Ieda-2004,Kawaguchi-2012,LSS-GP-2024}.

To construct the matrix GI hierarchy, by defining
\begin{equation}\label{P-J}
P:=J_{21}J_{11}^{-1}
\end{equation}
and using \eqref{QR-K}, from the relation \eqref{P-derivetive} we have
\begin{align}\label{P-MT}
	P_x=-R+PQP,
\end{align}
which is the matrix version of the Miura transformation \eqref{Riccati-Miura-transformation}
and connects the matrix AKNS and GI hierarchies.
Then we present the following theorem.

\begin{theorem}\label{Th-2}
For the GL($M,\mathbb C$) ASDYM hierarchy, we consider the compatible gauge potential expression  \eqref{gauge-potential} and the $K$-matrix formulation \eqref{K-matrix-formulation}
that satisfy the constraint \eqref{dimensional-reduction} where $M=m_1+m_2$.
Setting $x:=t_1$ as the spacial variable and defining
\begin{equation}\label{PQ-J}
(P,Q):=(J_{21}J_{11}^{-1},-K_{12}),
\end{equation}
we have   the recursive representation of the matrix GI hierarchy
\begin{subequations}\label{mc-GI}
\begin{align}
			\label{mc-GI-1}
& P_{t_{n+1}}
=P_{x,t_n}-\partial_x^{-1}(PQ_x+PQPQ)_{t_n}P-P\partial_x^{-1}(Q_xP+QPQP)_{t_n},~~~~P_{t_0}=P, \\
			\label{mc-GI-2}
& Q_{t_{n+1}}=-Q_{x,t_n}-\partial_x^{-1}(QP_x-QPQP)_{t_n}Q-Q\partial_x^{-1}(P_xQ-PQPQ)_{t_n},
~~~~ Q_{t_0}=-Q,
\end{align}
\end{subequations}
where $P$ and $Q^T$ are $m_1\times m_2$ matrix-valued functions.
\end{theorem}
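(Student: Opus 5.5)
The plan is to obtain the $Q$-equation from Theorem~\ref{Th-1} via the Miura relation \eqref{P-MT}, and to obtain the $P$-equation directly from the $J$-dynamics \eqref{J-tn+1}. In both cases the remaining task is to rearrange the result into the form \eqref{mc-GI}. Throughout, I use $K_{11}=-\partial_x^{-1}(QR)$ and $K_{22}=\partial_x^{-1}(RQ)$ from \eqref{K-system-a}, together with $R=-P_x+PQP$ from \eqref{P-MT}. Substituting gives
\[
RQ=-P_xQ+PQPQ,\qquad QR=-QP_x+QPQP .
\]

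\emph{The $Q$-equation.} Since $Q=-K_{12}$ is the same function as in Theorem~\ref{Th-1}, equation \eqref{mc-AKNS-2} holds. Inserting the expressions for $RQ$ and $QR$ above into $Q\partial_x^{-1}(RQ)_{t_n}+\partial_x^{-1}(QR)_{t_n}Q$ gives exactly \eqref{mc-GI-2}. The initial condition $Q_{t_0}=-Q$ is unchanged from Theorem~\ref{Th-1}.

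\emph{The $P$-equation.} I differentiate $P=J_{21}J_{11}^{-1}$:
\[
\partial_{n+1}P=(\partial_{n+1}J_{21}-P\,\partial_{n+1}J_{11})J_{11}^{-1}.
\]
Using \eqref{J-tn+1}, the terms containing $J_{21}$ recombine into factors of $P$, giving
\[
P_{t_{n+1}}=-R_{t_n}-\partial_x^{-1}(RQ)_{t_n}P-P\partial_x^{-1}(QR)_{t_n}-PQ_{t_n}P .
\]
Next I replace $-R_{t_n}$ by $P_{x,t_n}-(PQP)_{t_n}$. The local terms are then
\[
-P_{t_n}QP-2PQ_{t_n}P-PQP_{t_n}.
\]
On the target side, I write $PQ_x=(PQ)_x-P_xQ$, so that
\[
-\partial_x^{-1}(PQ_x+PQPQ)_{t_n}=-(PQ)_{t_n}-\partial_x^{-1}(RQ)_{t_n}.
\]
Symmetrically, $Q_xP=(QP)_x-QP_x$ gives
\[
-\partial_x^{-1}(Q_xP+QPQP)_{t_n}=-(QP)_{t_n}-\partial_x^{-1}(QR)_{t_n}.
\]
The right-hand side of \eqref{mc-GI-1} therefore equals
\[
P_{x,t_n}-(PQ)_{t_n}P-P(QP)_{t_n}-\partial_x^{-1}(RQ)_{t_n}P-P\partial_x^{-1}(QR)_{t_n},
\]
and its local part expands to the same $-P_{t_n}QP-2PQ_{t_n}P-PQP_{t_n}$. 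This identification needs $\partial_x^{-1}\partial_x=\mathrm{id}$ on the relevant quantities, which holds under the decay conventions for $\partial_x^{-1}$ (or by fixing integration constants consistently with \eqref{K-system-a}).

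\emph{Main obstacle: the normalization $P_{t_0}=P$.} The relation \eqref{gauge-potential} governs $\partial_{n+1}J$ through $\partial_nK$, so $\partial_0J$ is tied to $\partial_{-1}K$ and is not fixed by \eqref{dimensional-reduction}. I would use the residual gauge freedom $J\mapsto Jh$ to impose the compatible reduction $\partial_0J=[J,E]$. From \eqref{3.4}, $(JE-EJ)_{21}=J_{21}$ and $(JE-EJ)_{11}=0$, which gives $P_{t_0}=P$. As a consistency check, the $n=0$ case of the formula derived above must then reproduce $P_{t_1}=P_x$, i.e.\ the derived flow must be compatible with $x=t_1$. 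I expect this check, together with verifying that the chosen $t_0$-dependence of $J$ is compatible with \eqref{J-tn+1} for $n=-1$, to be the only genuinely delicate point. Everything else is the algebraic bookkeeping above.
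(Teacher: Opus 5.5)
Your proposal is correct and is essentially the paper's proof. The $Q$-equation comes from \eqref{mc-AKNS-2} together with the Miura relation \eqref{R-MT}, and the $P$-equation from differentiating $J_{21}J_{11}^{-1}$ via \eqref{J-tn+1} and regrouping the local terms as you do; you also attach the correct equation labels, which the paper's proof swaps.

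Your discussion of $P_{t_0}=P$ goes beyond the paper, which simply asserts it. Imposing $\partial_0J=[J,E]$ is exactly what the explicit solutions of Sec.~\ref{sec-5-2} satisfy. The $n=0$ check you flag is automatic: $R_{t_0}=R$ and $Q_{t_0}=-Q$ give $(RQ)_{t_0}=(QR)_{t_0}=0$, so your formula yields $P_{t_1}=-R+PQP=P_x$.
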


\begin{proof}
We make use of the Miura transformation \eqref{P-MT}
which we rewrite as
\begin{equation}\label{R-MT}
R=-P_x+PQP.
\end{equation}
Inserting it into \eqref{mc-AKNS-2}  yields   \eqref{mc-GI-1} directly.
To get  \eqref{mc-GI-2},  we calculate the $t_{n+1}$-derivative of $P$ by using \eqref{J-tn+1},
which gives rise to
	\begin{align*}
		P_{t_{n+1}}=-R_{t_n}-P\partial_x^{-1}(QR)_{t_n}-\partial_x^{-1}(RQ)_{t_n}P-PQ_{t_n}P.
	\end{align*}
Then we substitute \eqref{R-MT} into the above relation, and we are led to
\begin{align*}
		P_{t_{n+1}}&=P_{x,t_n}-P(QP)_{t_n}-(PQ)_{t_n}P-P\partial_x^{-1}(-QP_x+QPQP)_{t_n}
-\partial_x^{-1}(-P_xQ+PQPQ)_{t_n}P \\
		&=P_{x,t_n}-\partial_x^{-1}(PQ_x+PQPQ)_{t_n}P-P\partial_x^{-1}(Q_xP+QPQP)_{t_n},
	\end{align*}
which is exactly \eqref{mc-GI-2}.
We complete the proof.
\end{proof}

\subsection{Reduction to the vector DNLS hierarchies}\label{sec-3-3}

\subsubsection{The scalar case}\label{sec-3-3-1}

To investigate the multi-component extensions of other DNLS hierarchies,
we first realize the gauge transformation of the scalar GI hierarchy in the ASDYM framework, where $m_1=m_2=1$.
By Theorem \ref{Th-2}, the variables of the scalar GI hierarchy are defined as
\begin{equation}\label{pq-J}
(p,q):=(J_{21}J_{11}^{-1},-K_{12}).
\end{equation}
Noticing that from \eqref{J-t1} we have
$\partial_x J_{11}=K_{12} J_{21}$,
since $p$ and $q$ are scalars, we have
\begin{equation}\label{pq-Jx}
	qp=-K_{12}J_{21}J_{11}^{-1}=-(\partial_xJ_{11})J_{11}^{-1}=-(\ln J_{11})_x.
\end{equation}
This indicates that the gauge factor \eqref{gauge-factor} is determined by
\begin{align}\label{gauge-factor-J11}
	s=\exp\left(-\partial_x^{-1}(qp)\right)
=\exp(\partial_x^{-1}(\ln J_{11})_x)=J_{11},
\end{align}
which agrees with \eqref{s-dynamics-x} and \eqref{pq-Jx}.
Now that $p, q$ and the gauge factor $s$ are formulated in terms of the ASDYM variables $J$ and $K$,
so are the DNLS variables defined in \eqref{uv}, \eqref{pq-tilde} and \eqref{pq-hat}.
In the following we summarize the realizations of the DNLS hierarchies in the ASDYM framework in Table \ref{Tab-1}.

\begin{table}[h!tbp]
	\caption{Realizations of the DNLS hierarchies in the ASDYM framework} \label{Tab-1}
	\vskip 10pt
	\centering
	\renewcommand{\arraystretch}{1.5}
	\begin{tabular}{|l|l|}
		\hline
		 {The DNLS hierarchies} &  {Realizations in the ASDYM framework} \\
		\hline The GI hierarchy \eqref{GI} & $(p,q)=(J_{21}J_{11}^{-1},-K_{12})$ \\
		\hline The CLL hierarchy \eqref{CLL} & $(\tilde p,\tilde q)=(J_{21},-J_{11}^{-1}K_{12})$ \\
		\hline The KN hierarchy \eqref{KN} & $(\hat p,\hat q)=(J_{21}J_{11},-J_{11}^{-2}K_{12})$ \\
		\hline The GDNLS hierarchy \eqref{GDNLS} & $(u,v)=(J_{21}J_{11}^{\gamma-1},-J_{11}^{\gamma}K_{12})$ \\
		\hline The gauge factor \eqref{gauge-factor} & $s=J_{11}$ \\ \hline
	\end{tabular}
\end{table}

\subsubsection{The vector case}\label{sec-3-3-2}

The above discussion in the scalar case can be extended to the vector case,
if we still assume $J_{11}$ to be a scalar function
while $P$ and $Q$ defined in \eqref{PQ-J} are  a column vector and a row vector respectively.
This setting corresponds to the partition $M=m_1+m_2$ where $m_1=1, m_2=m$.
We now construct the vector GDNLS hierarchy.

\begin{theorem}\label{Th-3}
For the GL($M,\mathbb C$) ASDYM hierarchy, we consider the compatible gauge potential expression  \eqref{gauge-potential} and the $K$-matrix formulation \eqref{K-matrix-formulation}  that satisfy the constraint \eqref{dimensional-reduction} where $m_1=1, m_2=m$.
Set $x:=t_1$ as the spacial variable and define
\begin{equation}\label{UV-J}
(U,V):=(J_{21}J_{11}^{\gamma-1},-J_{11}^{\gamma}K_{12}).
\end{equation}
Note that in this case $J_{11}$ is a scalar while $U$ and $V^T$ are $m$-th order column vectors.
Then we have the recursive representation of the vector GDNLS hierarchy
	\begin{subequations}\label{VGDNLS}
		\begin{align}
			U_{t_{n+1}}&=U_{x,t_n}+\gamma \partial_x^{-1}(U_xV)_{t_n}U+\gamma U_x\partial_x^{-1}(VU)_{t_n} \notag \\
			&~~~+(\gamma-1)U\partial_x^{-1}(V_xU)_{t_n}+(\gamma-1)\partial_x^{-1}(UV_x)_{t_n}U \notag \\
			\label{VGDNLS-1}&~~~+(\gamma-1)\partial_x^{-1}(UVUV)_{t_n}U-(\gamma-1)^2U\partial_x^{-1}(VUVU)_{t_n},
~~~~ U_{t_0}=U, \\
			V_{t_{n+1}}&=-V_{x,t_n}+\gamma V\partial_x^{-1}(UV_x)_{t_n}+\gamma\partial_x^{-1}(VU)_{t_n}V_x \notag \\
			&~~~+(\gamma-1)\partial_x^{-1}(VU_x)_{t_n}V+(\gamma-1)V\partial_x^{-1}(U_xV)_{t_n} \notag \\
			\label{VGDNLS-2}&~~~-(\gamma-1)V\partial_x^{-1}(UVUV)_{t_n}+(\gamma-1)^2\partial_x^{-1}(VUVU)_{t_n}V,
~~~~ V_{t_0}=-V.
		\end{align}
	\end{subequations}
\end{theorem}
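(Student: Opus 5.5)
The plan is to deduce Theorem \ref{Th-3} from the vector GI hierarchy of Theorem \ref{Th-2}, specialised to $m_1=1$, $m_2=m$, by a gauge transformation. The key point is that the gauge factor is the scalar $s=J_{11}$.

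First we identify the scalar $s$. With $P=J_{21}J_{11}^{-1}$ (a column) and $Q=-K_{12}$ (a row), the quantity $QP$ is a scalar. Relation \eqref{J-t1} gives $\partial_xJ_{11}=K_{12}J_{21}$, so
\[
QP=-(\ln J_{11})_x ,
\]
exactly as in \eqref{pq-Jx}. Then \eqref{UV-J} reads $U=s^{\gamma}P$ and $V=s^{-\gamma}Q$ with $s=J_{11}$. We record the dynamics of $s$:
\[
s_x/s=-QP, \qquad s_{t_n}/s=-\partial_x^{-1}(QP)_{t_n}.
\]
The $t_{n+1}$-dynamics can come either from \eqref{J-tn+1}, namely
\[
\partial_{n+1}J_{11}=-(\partial_nK_{11})J_{11}-(\partial_nK_{12})J_{21},
\]
or from the $t_{n+1}$ analogue of \eqref{s-dynamics}. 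Using $K_{11}=\partial_x^{-1}(K_{12}K_{21})=-\partial_x^{-1}(QR)$ and $R=-P_x+PQP$ from \eqref{R-MT}, this gives
\[
s_{t_{n+1}}/s=-\partial_x^{-1}(QP_x)_{t_n}+\partial_x^{-1}(QPQP)_{t_n}+Q_{t_n}P .
\]
Here all quantities are scalars, but the orders $QP_x$ and $QPQP$ must be kept as written.

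Next we differentiate:
\[
U_{t_{n+1}}=s^{\gamma}\bigl(P_{t_{n+1}}+\gamma (s_{t_{n+1}}/s)P\bigr).
\]
We insert \eqref{mc-GI-1} for $P_{t_{n+1}}$ and then rewrite every occurrence of $P,Q$ and their derivatives through
\[
P=s^{-\gamma}U,\qquad Q=s^{\gamma}V,\qquad P_x=s^{-\gamma}\bigl(U_x+\gamma (VU)U\bigr),
\]
with analogous formulas for $Q_x$ and the $t_n$-derivatives. The $t_n$-derivatives of $s^{\pm\gamma}$ produce terms proportional to $\partial_x^{-1}(VU)_{t_n}$. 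Note that $QP=VU$ and $PQ=UV$, so the quartic terms become $\partial_x^{-1}(UVUV)_{t_n}$ and $\partial_x^{-1}(VUVU)_{t_n}$ directly. Collecting terms should yield \eqref{VGDNLS-1}. The same computation with \eqref{mc-GI-2} and the factor $s^{-\gamma}$ yields \eqref{VGDNLS-2}. As a consistency check, the scalar case $m=1$ must reduce to \eqref{GDNLS}. The initial flows $U_{t_0}=U$ and $V_{t_0}=-V$ follow from $P_{t_0}=P$, $Q_{t_0}=-Q$ and $s_{t_0}=0$, since $(QP)_{t_0}=0$.

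The main obstacle is the bookkeeping in the collection step. Terms like $P\partial_x^{-1}(Q_xP)_{t_n}$ produce, after substitution,
\[
\partial_x^{-1}\bigl(s^{\gamma}(V_x-\gamma VU V)s^{-\gamma}U\bigr)_{t_n}
\]
and similar expressions. We must recognise which pieces combine into total $t_n$-derivatives such as $\partial_x^{-1}\bigl((VU)_x VU\bigr)_{t_n}$, and which pieces cancel against the $\gamma(s_{t_{n+1}}/s)$ contributions; the scalar integral $\partial_x^{-1}(\cdot)$ commutes with multiplication by the vectors. We also have to track the distinction between $U_xV$ (a matrix) and $VU_x$ (a scalar), and the placement of each $\partial_x^{-1}$ relative to $U$. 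To keep this manageable, we will organise the terms by powers of $\gamma$: the $\gamma^0$ part must reproduce the GI terms (checked at $\gamma=0$), and the linear and quadratic parts in $\gamma$ must then be matched separately.
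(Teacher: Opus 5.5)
Your proposal is correct and follows essentially the same route as the paper's proof in Appendix \ref{app-A}. Both identify the scalar gauge factor $s=J_{11}$ through $QP=-(\ln J_{11})_x$, record the $x$, $t_n$ and $t_{n+1}$ dynamics of $s$, write $U=s^{\gamma}P$ and $V=s^{-\gamma}Q$, and check \eqref{VGDNLS} against the vector GI flows \eqref{mc-GI}; your computation of $s_{t_{n+1}}$ directly from \eqref{J-tn+1} is a harmless shortcut.

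One point to make explicit: you are using $V=-J_{11}^{-\gamma}K_{12}=s^{-\gamma}Q$, which is the definition the paper uses in Appendix \ref{app-A} and Section \ref{sec-4-3}, whereas \eqref{UV-J} as displayed has exponent $+\gamma$. That exponent is a misprint: with it, $VU=s^{2\gamma}QP$ and the flows generically fail for $\gamma\neq 0$.
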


\begin{proof}
The proof is provided in Appendix \ref{app-A}.
\end{proof}

The system \eqref{VGDNLS} appears to be the vectorial generalization of \eqref{GDNLS}.
By defining $(\xi,\tau):=(\mathrm ix,\mathrm it_2)$ as real coordinates and imposing $V=\delta U^\dagger$,
it reduces to the vector GDNLS equation
\begin{align}\label{VGDNLS-equation}
	\mathrm iU_\tau+U_{\zeta\zeta}-2\mathrm i\gamma\delta U_{\zeta}U^\dagger U
-2\mathrm i(\gamma-1)\delta UU^\dagger_\zeta U+(\gamma-1)(\gamma-2)U(U^\dagger U)^2=0.
\end{align}
Assuming $U=(u_1,u_2,\dots,u_m)^T$, one rewrites \eqref{VGDNLS-equation} into its explicit form ($j=1,\dots,m$)
\begin{align}
	\mathrm iu_{j,\tau}+\mathrm iu_{j,\zeta\zeta}-2\mathrm i\gamma\delta u_{j,\zeta}\sum_{k=1}^m|u_k|^2-2\mathrm i(\gamma-1)\delta u_j
\sum_{k=1}^m \bar u_{k,\zeta} u_k+(\gamma-1)(\gamma-2)u_j\sum_{k=1}^m |u_k|^4=0.
\end{align}
Similar to the scalar GDNLS equation, the vector GDNLS equation \eqref{VGDNLS-equation} becomes the vector
GI DNLS equation when $\gamma=0$ (see equation (12) in \cite{Olver-1998})
\begin{align}
	\mathrm i U_\tau+U_{\zeta\zeta}-2\mathrm i\delta UU_\zeta^\dagger U+2(U^\dagger U)^2U=0,
\end{align}
and the vector CLL DNLS equation when $\gamma=1$ (see equation (10) in \cite{Olver-1998})
\begin{align}
	\mathrm i U_\tau+U_{\zeta\zeta}-2\mathrm i\delta U_\zeta U^\dagger U=0,
\end{align}
as well as the vector KN DNLS equation when $\gamma=2$ (see equation (2) in \cite{Olver-1998})
\begin{align}
	\mathrm i U_\tau+U_{\zeta\zeta}-4\mathrm i\delta U_\zeta U^\dagger U-2\mathrm i\delta UU_\zeta^\dagger U=0.
\end{align}
For more details about the multi-component DNLS equations, one can also refer to \cite{Fordy-1984,Hisakado-1994,Tsuchida-1999,Tsuchida-1999-2} and the references therein.

\section{Additional comments on GL(2) gauge group}\label{sec-4}

When the gauge group is restricted to be $G=\mathrm{GL}(2)$, the matrix AKNS and vector DNLS hierarchies
become scalar systems in the ASDYM reduction framework.
In this case, more fruitful results would emerge.

\subsection{Lax representation of the AKNS system}\label{sec-4-1}

The first result is the realization of the Lax representation of the AKNS system in the ASDYM framework.
This provides an alternative way to obtain the AKNS hierarchy
from the ASDYM reduction.
To achieve that, in addition to the dimensional reduction constraint \eqref{dimensional-reduction},
we assume $\partial_0\Psi=\sigma_3\Psi$.
Then the ASDYM linear system \eqref{ASDYM-linear-system} with positive $n$ is written as
\begin{align}\label{Lax-K}
	\Psi_x=\left(\lambda\sigma_3-[K,\frac{1}{2}\sigma_3]\right)\Psi, ~~~~
	\Psi_{t_n}=\left(\lambda^n\sigma_3-\sum_{i=0}^{n-1}\lambda^iK_{t_{n-i-1}}\right)\Psi.
\end{align}
According to \eqref{K-system}, the $K$-matrix can be expressed with respect to
\eqref{QR-K} in scalar case, i.e.
\begin{equation}\label{qr-K}
(r,q):=(K_{21},-K_{12}),
\end{equation}
which reads
\begin{align}\label{K-construction}
	K=\begin{pmatrix}
		-\partial_x^{-1}(qr) & -q \\
		r & \partial_x^{-1}(qr)
	\end{pmatrix}.
\end{align}
Substituting \eqref{K-construction} into \eqref{Lax-K}  gives rise to the Lax pair for the AKNS hierarchy \eqref{AKNS},
in which the $x$-derivative of $\Psi$ is
\begin{align}\label{Psi-x}
	\Psi_x=
	\begin{pmatrix}
		\lambda & -q \\
		-r & -\lambda
	\end{pmatrix}\Psi,
\end{align}
which is exactly the AKNS spectral problem.
For the time part, when $n=2$, the $t_2$-derivative of $\Psi$ expands to
\begin{align}\label{Psi-t2}
	\Psi_{t_2}=\left(\lambda^2\sigma_3-\lambda [K,\frac{1}{2}\sigma_3]-K_x \right)\Psi=
	\begin{pmatrix}
		\lambda^2+qr & -\lambda q+q_x \\
		-\lambda r-r_x & -\lambda^2-qr
	\end{pmatrix}\Psi.
\end{align}
The compatible condition of \eqref{Psi-x} and \eqref{Psi-t2} yields
the unreduced NLS system \eqref{NLS-couple}.
When taking $n=3$, we have
\begin{align}
	\Psi_{t_3}&=\left(\lambda^3\sigma_3-\lambda^2[K,\frac{1}{2}\sigma_3]-\lambda K_x-K_{t_2}\right)\Psi, \notag \\
	\label{Psi-t3}&=\begin{pmatrix}
		\lambda^3+\lambda qr-q_xr+qr_x & -\lambda^2q+\lambda q_x-q_{xx}+2q^2r \\
		-\lambda^2r-\lambda r_x-r_{xx}+2qr^2 & -\lambda^3-\lambda qr-qr_x+q_xr
	\end{pmatrix}\Psi,
\end{align}
which along with \eqref{Psi-x} gives rise to the coupled system for mKdV equation
\begin{subequations}
	\begin{align}
		r_{t_3}&=r_{xxx}-6qrr_x, \\
		q_{t_3}&=q_{xxx}-6qrq_x.
	\end{align}
\end{subequations}

\subsection{Reduction to the KP and mKP hierarchies}\label{sec-4-2}

In Sec.\ref{sec-2-4} and \ref{sec-2-5} we have reviewed the KP and mKP hierarchies
along with the associated squared eigenfunction symmetry constraints.
Since both the AKNS and CLL hierarchies can be formulated in the ASDYM reduction framework,
the KP and mKP hierarchies can be also obtained in the same  framework:
\begin{itemize}
	\item The KP hierarchy from the AKNS hierarchy: $\hat u=-qr=K_{12}K_{21}=K_{11,x}$.
	\item The mKP hierarchy from the CLL hierarchy: $\hat v=\tilde q\tilde p=qp=-K_{12}J_{21}J_{11}^{-1}=-(\ln J_{11})_x$.
\end{itemize}
Let us summarize the above constructions in the following theorem.

\begin{theorem}\label{Th-4}
For the GL($2,\mathbb C$) ASDYM hierarchy, we consider the $J$-matrix formulation  \eqref{J-matrix-formulation} and the $K$-matrix formulation \eqref{K-matrix-formulation}  that satisfy the constraint \eqref{dimensional-reduction}.
Set $(x,y):=(t_1,t_2)$ as the special variables.
Then, $u:=K_{11,x}$ satisfies the KP hierarchy \eqref{KP-hierarchy},
and  $v=-(\ln J_{11})_x$ satisfies the mKP hierarchy \eqref{mKP-hierarchy}.
\end{theorem}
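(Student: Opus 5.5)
The plan is to reduce both claims to results already established, namely Theorem \ref{Th-1}, Theorem \ref{Th-2} and the entries of Table \ref{Tab-1}, and then invoke the squared eigenfunction symmetry constraints reviewed in Sec.~\ref{sec-2-4} and \ref{sec-2-5}. With $m_1=m_2=1$, Theorem \ref{Th-1} shows that $(r,q)=(K_{21},-K_{12})$ solves the scalar AKNS hierarchy \eqref{AKNS} for every $n$, and in particular for $n\geq1$. By \eqref{K-system-a} we have $K_{11,x}=K_{12}K_{21}=-qr$. The constraint \eqref{u-qr} then identifies $\hat u=-rq$ as a KP solution, because under $L=\partial_x-r\partial_x^{-1}q$ the flows $r_{t_n}=B_nr$, $q_{t_n}=-B_n^*q$ coincide with the positive AKNS flows (the cited results). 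Hence $u=K_{11,x}$ solves \eqref{KP-hierarchy}.

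For the mKP part, Table \ref{Tab-1} gives $(\tilde p,\tilde q)=(J_{21},-J_{11}^{-1}K_{12})$. I would confirm that this pair satisfies the CLL hierarchy \eqref{CLL}. Two routes are available: take the case $\gamma=1$, $m=1$ of Theorem \ref{Th-3}, or combine Theorem \ref{Th-2} with the gauge transformation \eqref{pq-tilde} and the identification $s=J_{11}$ from \eqref{gauge-factor-J11}. Next I compute $\tilde p\tilde q=-J_{21}J_{11}^{-1}K_{12}$. Since the entries are scalars, \eqref{J-t1} gives $\partial_xJ_{11}=K_{12}J_{21}$, so $\tilde p\tilde q=-(\ln J_{11})_x$, which is \eqref{pq-Jx}. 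The symmetry constraint \eqref{v-sym} together with the result of \cite{Chen-2002} (the constrained mKP flows equal the CLL flows \eqref{CLL-h} for $n\geq1$) then shows that $v=-(\ln J_{11})_x$ solves \eqref{mKP-hierarchy}.

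I expect the main obstacle to be the mismatch between flow labels and variables. The KP and mKP hierarchies are written with $(x,y)=(t_1,t_2)$ and flows $t_m$ for $m\geq1$, while the ASDYM hierarchy includes $t_0$. I therefore need to check that the AKNS and CLL flows produced by the reduction, which are normalised by $R_{t_0}=R$, $Q_{t_0}=-Q$, agree with $R^nH_0$ and $\widetilde{\mathcal R}^n\widetilde{\mathcal H}_0$ using the same indexing and the same scaling as the constrained Lax flows $B_n$. Concretely, I would compare the lowest flows: $t_1$ should be $\partial_x$, and $t_2$ should reproduce \eqref{NLS-couple}. For KP, I would then check directly that $\hat u=K_{11,x}$ satisfies \eqref{KP-eq}, using $K_{11,y}=\partial_x^{-1}(qr)_{t_2}$ and the $t_3$ flow (the mKdV pair), to fix the normalisation. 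If the Lax-flow coefficients differ by a constant factor, I would absorb it by rescaling the $t_n$, since the $\tfrac14$ and $\tfrac34$ coefficients in \eqref{KP-eq} indicate the normalisation $B_n=(L^n)_{\geq0}$ with $L=\partial_x-r\partial_x^{-1}q$. I would run the same consistency check for mKP at $n=2,3$ using \eqref{CLL}.
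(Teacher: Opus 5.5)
Your proposal is correct and is essentially the paper's argument. You identify $K_{11,x}=K_{12}K_{21}=-qr$ via Theorem~\ref{Th-1} and \eqref{K-system-a}, and $-(\ln J_{11})_x=\tilde p\tilde q$ via the CLL realization in Table~\ref{Tab-1} and \eqref{J-t1}, then invoke the squared eigenfunction symmetry constraints \eqref{u-qr} and \eqref{v-sym}; your normalisation check is an extra the paper omits, and it passes with no rescaling of the $t_n$, since $B_2r=r_{xx}-2qr^2$ and $B_3r=r_{xxx}-6qrr_x$ coincide with the reduced AKNS flows.
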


We may address more words to explain how the KP hierarchy is constructed from
the AKNS hierarchy. In fact, in light of the symmetry construct \eqref{u-qr},
the positive AKNS hierarchy \eqref{AKNS-h} can recover
\[\phi_{t_n}=B_n\phi,~~~  \phi_{t_n}^*=-B_n^*\phi^*,~~~ (n=1,2,\cdots).\]
The KP hierarchy is defined from the compatibility of $\{\phi_{t_n}=B_n\phi\}$
for different $n$.
Likewise, the mKP hierarchy can be obtained from the CLL hierarchy and the constraint \eqref{v-sym}.

It is believed that the KP equation has not yet been derived from the ASDYM equations by reduction
in the known literature unless the existence of operator-valued gauge fields is allowed \cite{Schiff-1992-2,Ablowitz-2003}.
In this section, we have provided operator-free reductions for
both the KP and the mKP hierarchies.
In the formulation $K_{11}$ and $J_{11}$
are used, which belong to the entries of the $K$-matrix and $J$-matrix with the constraint \eqref{dimensional-reduction}.

\subsection{Bilinear transformation for the GDNLS equation}\label{sec-4-3}

The bilinear transformation for ASDYM equations has been given by \cite{Ohta-2001,Ohta-2024}:
\begin{align}\label{bilinear-transformation}
	J:=\frac{G}{f},~~~~J^{-1}:=\frac{S}{f},~~~~K:=\frac{H}{f},
\end{align}
which transforms \eqref{gauge-potential} into the following bilinear equations
\begin{align}
	D_{n+1}G\cdot S=-2D_{n} H\cdot (fI),
\end{align}
where $GS=f^2I$ and $I$ is the identity matrix. Here $D_n$ is the Hirota's bilinear derivative \cite{Hirota-book}
\begin{align}\label{bilinear-equation}
	D_nf(t_n)\cdot g(t_n)=(\partial_{n}-\partial_{n'})f(t_n)g(t_{n'})|_{n'=n}.
\end{align}

For the GL(2) ASDYM equations, where $G$ and $H$ are $2\times2$ matrices, we partition them as $G=\{g_{ij}\}$ and $H=\{h_{ij}\}$ ($1\leq i,j\leq 2$). Thus the entries in $J$ and $K$ can be rewritten as
\begin{align}
	J_{ij}=\frac{g_{ij}}{f},~~~~K_{ij}=\frac{h_{ij}}{f}.
\end{align}
According to the bilinear transformation \eqref{bilinear-transformation} and the ASDYM reduction framework, we propose the bilinear transformation for GDNLS system as
\begin{align}
	(u,v)=(J_{21}J_{11}^{\gamma-1},-J_{11}^{-\gamma}K_{12})
=\left(\frac{g_{11}^{\gamma-1}g_{21}}{f^\gamma},-\frac{f^{\gamma-1}h_{12}}{g_{11}^\gamma}\right),
\end{align}
which perfectly agrees with the  construction in \cite{Kakei-1995} by
Kakei, Sasa and Satsuma.
This fact provides an independent support
for the bilinear transformations in the ASDYM reductions.

The GDNLS hierarchy contains all three types of the DNLS hierarchies, their bilinear transformations
are summarized as follows:
\begin{itemize}
	\item The GI hierarchy \eqref{GI}: $(p,q)=(J_{21}J_{11}^{-1},-K_{12})=\left(g_{21}/g_{11},-h_{12}/f\right)$.
	\item The CLL hierarchy \eqref{CLL}: $(\tilde p,\tilde q)=(J_{21},-J_{11}^{-1}K_{12})=(g_{21}/f,-h_{12}/g_{11})$.
	\item The KN hierarchy \eqref{KN}: $(\hat p,\hat q)=(J_{21}J_{11},-J_{11}^{-2}K_{12})=(g_{11}g_{21}/f^2,-fh_{12}/g_{11}^2)$.
\end{itemize}
It is interesting to note that the FL equation has the same formulation as the GI system in this unified reduction framework (see \cite{Liu-2021}, Eq.(2.18)), although the FL system is derived as a potential form in the negative order KN hierarchy.
Our construction also explains why the bilinear transformations of the DNLS equations
have different tau functions \cite{Nakamura-1980,Wang-2022}.


\section{Special soliton solutions}\label{sec-5}

In Section 3, we have realized the AKNS hierarchy and the DNLS hierarchies in the ASDYM reduction framework. In this part, we will show the ASDYM solutions consequently give rise to the NLS and DNLS solutions.

\subsection{Solutions of the ASDYM}\label{sec-5-1}

While explicit solutions to the GL$(M,\mathbb C)$ ASDYM equations have been constructed in several works \cite{Huang-2021,Ohta-2024,LSS-2023}, we here present one kind of Gram-type solutions expressed by quasi-determinants.
Briefly speaking, the concept of quasi-determinant is a non-commutative generalization of the ratio between a matrix and its submatrix.
Consider a partitioned block matrix $H$ of the following form:
\begin{align}
	H =
	\begin{pmatrix}
		A & B \\
		C & d
	\end{pmatrix},
\end{align}
where $A\in\mathbb C_{N\times N}$ is a square invertible matrix,
$d\in\mathbb C_{m_1\times m_2}$, with $B$ and $C$ being matrices of compatible dimensions.
The quasi-determinant of $H$ with respect to the block $d$
(which is framed) is given by the formula
\begin{align}\label{quasi-D}
	|H|_{[d]}:=
	\begin{vmatrix}
		A & B \\
		C & \boxed{d}
	\end{vmatrix}=
	d-CA^{-1}B.
\end{align}
For details about properties of quasi-determinants, one can refer  to \cite{Gelfand-1991,Gelfand-2005}.

To present ASDYM solutions we first recall a result on solutions to the Sylvester equation.
\begin{lemma}\label{Lem-1} \cite{Sylvester-1884}
For given square matrices $A, B, C$ of the same order, the Sylvester equation
\begin{equation}
AX-XB=C
\end{equation}
has a unique matrix solution $X$ if $A$ and $B$ do not share eigenvalues.
\end{lemma}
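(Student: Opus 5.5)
The statement is Sylvester's classical theorem, and I would prove it by recasting the equation as a linear map on matrices and showing that map is injective. Let $A$ and $B$ be $N\times N$ matrices, and define the linear operator
\[
\mathcal S:\mathbb C^{N\times N}\to\mathbb C^{N\times N},\qquad \mathcal S(X)=AX-XB .
\]
Because $\mathcal S$ acts on a finite-dimensional space, it is bijective exactly when its kernel is trivial. Showing $\ker\mathcal S=\{0\}$ therefore gives existence and uniqueness of $X$ for every right-hand side $C$ at once.

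The core step is to show that $AX=XB$ forces $X=0$.
\begin{itemize}
\item Iterating $AX=XB$ gives $A^kX=XB^k$ for all $k\ge0$.
\item By linearity, $p(A)X=Xp(B)$ for every polynomial $p$.
\item Take $p=\chi_A$, the characteristic polynomial of $A$. Cayley--Hamilton gives $\chi_A(A)=0$, so $X\chi_A(B)=0$.
\item Write $\chi_A(B)=\prod_i(B-\alpha_iI)$, where $\alpha_i$ are the eigenvalues of $A$.
\item Each $\alpha_i$ is by hypothesis not an eigenvalue of $B$, so each factor $B-\alpha_iI$ is invertible. Hence $\chi_A(B)$ is invertible.
\item It follows that $X=0$.
\end{itemize}

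An alternative route uses the vectorization $\mathrm{vec}(AX-XB)=(I\otimes A-B^T\otimes I)\,\mathrm{vec}(X)$. After simultaneously triangularizing $A$ and $B^T$ (Schur form), the eigenvalues of this Kronecker matrix are the differences $\alpha_i-\beta_j$. These are all nonzero under the hypothesis, so the matrix is invertible.

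The only point requiring care is the claim that $\chi_A(B)$ is invertible: its eigenvalues are $\chi_A(\beta_j)=\prod_i(\beta_j-\alpha_i)\neq0$, where $\beta_j$ are the eigenvalues of $B$. This is exactly where the hypothesis enters; the rest is routine linear algebra.

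For its later use in the paper, I would remark that the argument never needs $A$ and $B$ to have the same size. It applies verbatim to $A\in\mathbb C^{N\times N}$, $B\in\mathbb C^{M\times M}$ and $X,C\in\mathbb C^{N\times M}$. This is the form likely needed for the Gram-type quasi-determinant solutions.
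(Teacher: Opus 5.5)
Your proof is correct and complete. The Cayley--Hamilton step gives $X\chi_A(B)=0$ with $\chi_A(B)=\prod_i(B-\alpha_iI)$ invertible, so $\ker\mathcal S=\{0\}$. Finite-dimensionality then turns injectivity into bijectivity, which gives existence and uniqueness together.

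The paper does not prove this lemma; it only cites Sylvester's 1884 result, so there is no competing argument to compare. Your Kronecker-product alternative is also valid, with one wording correction: $A$ and $B^T$ are triangularized \emph{separately}, $A=UT_AU^{-1}$ and $B^T=VT_BV^{-1}$. Conjugating by $V\otimes U$ then makes $I\otimes A-B^T\otimes I$ triangular with diagonal entries $\alpha_i-\beta_j$.

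Two remarks on how the paper uses the lemma:
\begin{itemize}
\item The rectangular generalization you mention is true but not needed, since $\Xi$ and $\Lambda$ in Theorem~\ref{Th-5} are both $N\times N$.
\item The explicit solutions take $\Xi$ and $\Lambda$ diagonal. In every concrete application the lemma therefore reduces to the entrywise identity $(\xi_i-\lambda_j)\Omega_{ij}=(\eta\theta)_{ij}$. This includes the deductions $\Omega=-\Omega^\dagger$ and $\Omega+\Omega^\dagger=\eta_2\eta_2^\dagger$ in Section~\ref{sec-5-3}, where the hypothesis reads $\xi_i\neq\bar\xi_j$.
\end{itemize}
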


Then we have the following.

\begin{theorem}\label{Th-5}
	Let $G=\text{GL}(M,\mathbb C)$, we assume $J^{[0]}$ and $K^{[0]}$ to be seed solutions that satisfy
	\begin{align}\label{5.4}
		A_{n+1}^{[0]}:=-(\partial_{n+1}J^{[0]})(J^{[0]})^{-1}=\partial_{n}K^{[0]}.
	\end{align}
	Suppose constant matrices $\Xi,\Lambda\in\mathbb C_{N\times N}$ do not share eigenvalues,
$\eta$ and $\theta$  satisfy the following linear differential system:
\begin{subequations}\label{linear-system}
\begin{align}
	&	\partial_{{n+1}}\eta-\Xi(\partial_{n}\eta)=\eta A^{[0]}_{n+1},\\
	&	\partial_{{n+1}}\theta-(\partial_{n}\theta)\Lambda=-A^{[0]}_{n+1} \theta.
	\end{align}
\end{subequations}
Introduce a square  matrix $\Omega$ as a solution of the Sylvester equation
	\begin{align}\label{Syl-eq}
		\Xi\Omega-\Omega\Lambda=\eta\theta,
	\end{align}
from which $\Omega$ is uniquely defined because $\Xi$ and $\Lambda$ do not share eigenvalues.
Then,	 the following constructions
	\begin{align}\label{VDT-ASDYM}
		J=
		\begin{vmatrix}
			\Omega & \Xi^{-1}\eta \\
			\theta & \boxed{I}
		\end{vmatrix}J^{[0]},~~~~K=-
		\begin{vmatrix}
			\Omega & \eta \\
			\theta & \boxed{0}
		\end{vmatrix}+K^{[0]}
	\end{align}
satisfy the compatible expression \eqref{gauge-potential} for the gauge potentials.
\end{theorem}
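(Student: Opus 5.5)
The plan is to verify $\partial_{n+1}J=-(\partial_nK)J$ directly. It is convenient to write
\[
J=WJ^{[0]},\qquad W:=I-\theta\Omega^{-1}\Xi^{-1}\eta,\qquad K=F+K^{[0]},\qquad F:=\theta\Omega^{-1}\eta,
\]
using the definition \eqref{quasi-D} of the quasi-determinant. Throughout I assume, as is implicit in \eqref{VDT-ASDYM}, that $\Xi$ and $\Omega$ are invertible. Since $\partial_{n+1}J^{[0]}=-A^{[0]}_{n+1}J^{[0]}$ and $\partial_nK^{[0]}=A^{[0]}_{n+1}$ by \eqref{5.4}, the required identity reduces (after cancelling $J^{[0]}$ on the right) to the seed-dressed relation
\[
\partial_{n+1}W-WA^{[0]}_{n+1}=-(\partial_nF)W-A^{[0]}_{n+1}W .
\]
This is the statement I will prove.

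The first key step is to obtain the dynamics of $\Omega$ from the Sylvester equation \eqref{Syl-eq} together with the uniqueness in Lemma \ref{Lem-1}. Differentiating \eqref{Syl-eq} with respect to $t_{n+1}$ and using \eqref{linear-system}, the terms involving $A^{[0]}_{n+1}$ cancel, leaving
\[
\Xi(\partial_{n+1}\Omega)-(\partial_{n+1}\Omega)\Lambda=\Xi(\partial_n\eta)\theta+\eta(\partial_n\theta)\Lambda .
\]
Using the $t_n$-derivative of \eqref{Syl-eq}, one checks that both candidates
\[
X_1=(\partial_n\Omega)\Lambda+(\partial_n\eta)\theta,\qquad X_2=\Xi(\partial_n\Omega)-\eta(\partial_n\theta)
\]
satisfy this same Sylvester equation. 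Uniqueness (no shared eigenvalues) then gives
\[
\partial_{n+1}\Omega=(\partial_n\Omega)\Lambda+(\partial_n\eta)\theta=\Xi(\partial_n\Omega)-\eta(\partial_n\theta).
\]
These two expressions let me trade $t_{n+1}$-derivatives for $t_n$-derivatives, in exactly the way \eqref{linear-system} does for $\eta$ and $\theta$.

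The second step is to expand $\partial_{n+1}W=-\partial_{n+1}(\theta)\,\Omega^{-1}\Xi^{-1}\eta+\theta\Omega^{-1}(\partial_{n+1}\Omega)\Omega^{-1}\Xi^{-1}\eta-\theta\Omega^{-1}\Xi^{-1}\partial_{n+1}\eta$. I then substitute \eqref{linear-system} for $\partial_{n+1}\theta$ and $\partial_{n+1}\eta$, choosing the version of $\partial_{n+1}\Omega$ that lets a $\Xi^{-1}$ or a $\Lambda$ be absorbed. The $A^{[0]}_{n+1}$ terms produce exactly $A^{[0]}_{n+1}(W-I)-(W-I)A^{[0]}_{n+1}$, which matches the seed parts on both sides. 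What remains is an identity purely among $\theta,\eta,\Omega,\Xi,\Lambda$ and their $t_n$-derivatives. On the right I expand $\partial_nF=(\partial_n\theta)\Omega^{-1}\eta-\theta\Omega^{-1}(\partial_n\Omega)\Omega^{-1}\eta+\theta\Omega^{-1}\partial_n\eta$ and multiply by $W$.

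The main obstacle is this final matching, because of the quadratic terms in $(\partial_nF)W$ such as $\theta\Omega^{-1}\eta\theta\Omega^{-1}\Xi^{-1}\eta$. My plan is to eliminate every occurrence of $\eta\theta$ with \eqref{Syl-eq}, i.e. $\eta\theta=\Xi\Omega-\Omega\Lambda$, which yields useful simplifications such as
\[
\Omega^{-1}\eta\theta\Omega^{-1}\Xi^{-1}=\Omega^{-1}-\Lambda\Omega^{-1}\Xi^{-1}.
\]
After this substitution everything becomes linear in the pairs $(\partial_n\theta,\eta)$, $(\theta,\partial_n\eta)$ and $(\theta,\partial_n\Omega)$, and the terms should cancel pairwise. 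If bookkeeping becomes heavy, a fallback is to first prove the analogous statement for $W^{-1}$: by the quasi-determinant inversion formula I expect $W^{-1}=I+\theta\Lambda^{-1}\Omega^{-1}\eta$. I would then verify the equivalent relation $\partial_nF=-(\partial_{n+1}W)W^{-1}+WA^{[0]}_{n+1}W^{-1}-A^{[0]}_{n+1}$, where the $\Lambda^{-1}$ interacts directly with the first expression for $\partial_{n+1}\Omega$.
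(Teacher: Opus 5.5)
Your proposal is correct and is essentially the paper's proof: Appendix~\ref{app-B} also writes $J=WJ^{[0]}$, $K=Z+K^{[0]}$ with $Z=\theta\Omega^{-1}\eta$ (your $F$), and derives the same two expressions for $\partial_{n+1}\Omega$ (Lemma~\ref{lem-A2}). It then reduces everything to $\partial_{n+1}W+(\partial_nZ)W=[W,A^{[0]}_{n+1}]$ (Lemma~\ref{lem-UV}) and closes it with exactly your identity $\Omega^{-1}\eta\theta\Omega^{-1}\Xi^{-1}=\Omega^{-1}-\Lambda\Omega^{-1}\Xi^{-1}$, only phrased through the quasi-determinant derivative formula of Lemma~\ref{lem-A1} instead of explicit expansion.

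One small sign slip: the seed terms in $\partial_{n+1}W$ come out as $(W-I)A^{[0]}_{n+1}-A^{[0]}_{n+1}(W-I)=[W,A^{[0]}_{n+1}]$. That is the negative of what you wrote, and it is precisely what your target identity requires; after that, the direct expansion with $\partial_{n+1}\Omega=(\partial_n\Omega)\Lambda+(\partial_n\eta)\theta$ matches term by term, so the $W^{-1}$ fallback is unnecessary.
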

\begin{proof}
We leave the proof with details in Appendix \ref{app-B}.
A similar one is available in Section 3 of \cite{LSS-2025-2}.
\end{proof}

We are going to present explicit forms for the elements involved in  \eqref{VDT-ASDYM}.
By setting $J^{[0]}=I$ and $K^{[0]}=0$, i.e., $A_{n+1}^{[0]}=0$, the ASDYM solution in vacuum state is governed by the following Cauchy matrix structure (see also in Section 2.3 of \cite{LSS-2023}):
\begin{align}\label{Cauchy matrix structure}
	\Xi\Omega-\Omega\Lambda=\eta\theta,~~~~
	\partial_{n}\eta=\Xi^n(\partial_0\eta),~~~~\partial_n\theta=(\partial_0\theta)\Lambda^n.
\end{align}
For convenience, we choose $\Xi$ and $\Lambda$ to be diagonal matrices,
and thus the system \eqref{Cauchy matrix structure} is satisfied by the following construction:
\begin{subequations}\label{5.8}
	\begin{align}
		&\Xi:=\mathrm{Diag}(\xi_1,\cdots,\xi_N),~~~~\eta:=\left(\phi_{is}(\mathfrak L(\xi_i,\mathbf t))\right)_{N\times M}, \\
		&\Lambda:=\mathrm{Diag}(\lambda_1,\cdots,\lambda_N),~~~~\theta:=\left(\psi_{sj}(\mathfrak L(\lambda_j,\mathbf t))\right)_{M\times N}, \\
		&	\Omega=(\Omega_{ij})_{N\times N},~~~~\Omega_{ij}
=\frac{\sum_{s=1}^M\phi_{is}\psi_{sj}}{\xi_i-\lambda_j}, \label{Omega}
	\end{align}
\end{subequations}
where $\phi_{is}$ and $\psi_{sj}$ are arbitrary functions of the following linear phase factors respectively:
\begin{align}
	\mathfrak L(\xi_i,\mathbf t)=\sum_{n\in\mathbb Z}\xi_i^nt_{n},~~~~\mathfrak L(\lambda_j,\mathbf t)=\sum_{n\in\mathbb Z}\lambda_j^nt_{n}.
\end{align}
In this case, the explicit formulae of matrix formulations are given by
\begin{align}\label{5.10}
	J=\begin{vmatrix}
		\Omega & \Xi^{-1}\eta \\
		\theta & \boxed{I}
	\end{vmatrix}=I-\theta\Omega^{-1}\Xi^{-1}\eta,~~~~
	K=-\begin{vmatrix}
		\Omega & \eta \\
		\theta & \boxed{0}
	\end{vmatrix}=\theta\Omega^{-1}\eta.
\end{align}

\subsection{Realization of dimensional reduction}\label{sec-5-2}

To meet with the constraint \eqref{dimensional-reduction},
we impose the initial condition on the $t_0$-flows of $\eta$ and $\theta$:
\begin{align}\label{initial-condition}
	\partial_0\eta=\eta E,~~~~\partial_0\theta=-E\theta,~~~~
\end{align}
where $E$ is the block matrix defined in \eqref{3.4}.
In fact, taking derivative $\partial_0$ on the Sylvester equation  in \eqref{Cauchy matrix structure}
yields
\begin{align}
	\Xi(\partial_0\Omega)-(\partial_0\Omega)\Lambda=\eta E\theta -\eta E\theta=0,
\end{align}
which leads to $\partial_0\Omega=0$. Thus, from \eqref{5.10} we have
\begin{align}
	\partial_0K=(\partial_0\theta)\Omega^{-1}\eta+\theta\Omega^{-1}(\partial_0\theta)=-EK+KE=[K,E],
\end{align}
which recovers the constraint \eqref{dimensional-reduction}.

To get explicit expressions for $J_{ij}$ and $K_{ij}$ $(i,j=1,2)$, by reformulating $\eta$ and $\theta$ into
\begin{subequations}\label{5.12}
	\begin{align}
		&\eta=(\eta_1,\eta_2),~~~~\eta_1=(\phi_{is})_{N\times m_1},~~~~\eta_2=(\phi_{is})_{N\times m_2}, \\
		&\theta=\begin{pmatrix}
			\theta_1 \\ \theta_2
		\end{pmatrix},
		~~~~\theta_1=(\psi_{sj})_{m_1\times N},~~~~\theta_2=(\psi_{sj})_{m_2\times N},
	\end{align}
\end{subequations}
the elements in the  block matrices $J$ and $K$ in \eqref{matrix-block} are respectively given by
\begin{subequations}\label{J-expand}
	\begin{align}
		&J_{11}=I-\theta_1\Omega^{-1}\Xi^{-1}\eta_1,~~~~
		J_{12}=-\theta_1\Omega^{-1}\Xi^{-1}\eta_2, \\
		&J_{21}=-\theta_2\Omega^{-1}\Xi^{-1}\eta_1,~~~~
		J_{22}=I- \theta_2\Omega^{-1}\Xi^{-1}\eta_2,
	\end{align}
\end{subequations}
as well as
\begin{subequations}\label{K-expand}
	\begin{align}
		&K_{11}=\theta_1\Omega^{-1}\eta_1,~~~~K_{12}=\theta_1\Omega^{-1}\eta_2, \\
		&K_{21}=\theta_2\Omega^{-1}\eta_1,~~~~K_{22}=\theta_2\Omega^{-1}\eta_2.
	\end{align}
\end{subequations}
In addition, the initial condition \eqref{initial-condition}
indicates that $\phi_{is}$ and $\psi_{sj}$ are exponential functions of the following forms:
\begin{subequations}\label{phi-psi}
	\begin{align}
		&\phi_{is}=A_{is}\exp\left(\frac{1}{m_1+m_2}\sum_{n\in\mathbb Z}\xi_i^nt_n\right),~~~~1\leq s\leq m_1, \\
		&\phi_{is}=A_{is}\exp\left(-\frac{1}{m_1+m_2}\sum_{n\in\mathbb Z}\xi_i^nt_n\right),
~~~~m_1+1\leq s\leq m_1+m_2, \\
		&\psi_{sj}=B_{sj}\exp\left(-\frac{1}{m_1+m_2}\sum_{n\in\mathbb Z}\lambda_j^nt_n\right),
~~~~1\leq s\leq m_1, \\
		&\psi_{sj}=B_{sj}\exp\left(\frac{1}{m_1+m_2}\sum_{n\in\mathbb Z}\lambda_j^nt_n\right),
~~~~m_1+1\leq s\leq m_1+m_2,
	\end{align}
\end{subequations}
where $A_{is},B_{sj}\in\mathbb C$ are initial phase factors.
Inserting them into \eqref{5.12} and  \eqref{5.8} one can get explicit forms of $\eta_j, \theta_j$
and $\Omega$, and then get $J_{ij}$ and $K_{ij}$ from \eqref{J-expand} and \eqref{K-expand},
respectively.

\subsection{Solutions of the focusing NLS equation and GI DNLS equation}\label{sec-5-3}

As examples, we show in the following that how solutions of the
focusing NLS equation \eqref{NLS}$|_{\delta=1}$
and the focusing GI DNLS equation \eqref{GI-equation}$|_{\delta=1}$
can be obtained from the above solutions presented in Sec.\ref{sec-5-2}.

To get solutions for these two equations,
we set $G=\text{GL}(2)$ and define $(\zeta,\tau):=(\mathrm it_1,\mathrm it_2)$ to be real coordinates,
 which indicates $m_1=m_2=1$ and \eqref{phi-psi} becomes
\begin{subequations}\label{phi-psi-GL2}
	\begin{align}
		&\phi_{i1}=A_{i1}\exp\left(-\mathrm i\frac{\xi_i}{2}\zeta-\mathrm i\frac{\xi_i^2}{2}\tau\right),~~~~
		\phi_{i2}=A_{i2}\exp\left(\mathrm i\frac{\xi_i}{2}\zeta+\mathrm i\frac{\xi_i^2}{2}\tau\right), \label{5.19a} \\
		&\psi_{1j}=B_{1j}\exp\left(\mathrm i\frac{\lambda_j}{2}\zeta+\mathrm i\frac{\lambda_j^2}{2}\tau\right),~~~~
		\psi_{2j}=B_{2j}\exp\left(-\mathrm i\frac{\lambda_j}{2}\zeta-\mathrm i\frac{\lambda_j^2}{2}\tau\right).
	\end{align}
\end{subequations}
$\eta_j, \theta_j$ and $\Omega$ are correspondingly defined from
\eqref{5.12} and \eqref{Omega}.

\subsubsection{Solutions of the focusing NLS equation}\label{sec-5-3-1}

In light of Theorem \ref{Th-1} and \eqref{K-expand}, we have
\begin{equation}
(r,q)=(\theta_2\Omega^{-1}\eta_1,-\theta_1\Omega^{-1}\eta_2),
\end{equation}
which provide solutions to the  following unreduced NLS system:
\begin{align}
	\mathrm ir_{\tau}+r_{\zeta\zeta}+2qr^2=0,~~~~\mathrm iq_{\tau}-q_{\zeta\zeta}-2q^2r=0.
\end{align}
To achieve solutions for the focusing NLS equation, we introduce the conjugate constraints as
\begin{align}
	\Lambda=\Xi^\dagger,~~~~
A_{i1}=\bar B_{1i}, ~~~~  A_{i2}=\bar B_{2i}, ~~~~(i=1,2,\cdots N).
\end{align}
It follows that
\begin{align}
\theta_1=\eta_1^\dagger,~~~~\theta_2=\eta_2^\dagger,
\end{align}
where
\begin{align}\label{eta-expression}
	\eta_j=(\phi_{1j}, \phi_{2 j}, \cdots, \phi_{N j})^T, ~~~~ j=1,2,
\end{align}
with $\{\phi_{kj}\}$ given in \eqref{5.19a}.
Substituting them into the Sylvester equation \eqref{Syl-eq} yields
\begin{align}
	\Xi\Omega-\Omega\Xi^\dagger=\eta_1\eta_1^\dagger+\eta_2\eta_2^\dagger.
\end{align}
Comparing it with its complex conjugate transpose and making use of Lemma \ref{Lem-1},
we have $\Omega=-\Omega^\dagger$. Finally we reach to
\begin{align}
	q=- \eta_1^\dagger\Omega^{-1}\eta_2 =(\eta_2^\dagger\Omega^{-1}\eta_1)^\dagger=\bar r.
\end{align}
Thus the above $q$  solves the focusing NLS equation
\begin{align}
	\mathrm iq_\tau+q_{\zeta\zeta}+2|q|^2q=0.
\end{align}

\subsubsection{Solutions of the focusing GI DNLS equation}\label{sec-5-3-2}

Likewise, by Theorem 2 and \eqref{K-expand}, we have
\begin{align}
	(p,q)=\left(\frac{J_{21}}{J_{11}},-K_{12}\right)
=\left(-\frac{\theta_2\Omega^{-1}\Xi^{-1}\eta_1}{1-\theta_1\Omega^{-1}\Xi^{-1}\eta_1},-\theta_1\Omega^{-1}
\eta_2\right)
\end{align}
that satisfy the unreduced GI system
\begin{align}
	\mathrm ip_\tau+p_{\zeta\zeta}+2\mathrm ip^2q_\zeta+2p^3q^2=0, ~~~~
	\mathrm iq_\tau-q_{\zeta\zeta}+2\mathrm i q^2p_\zeta-2p^2q^3=0.
\end{align}
Then we impose the conjugate constraints
\begin{align}\label{5.30}
	\Lambda=\Xi^\dagger,~~~~A_{i1}=\bar B_{1i}, ~~~~ A_{i2}=-\bar \xi_i\bar B_{2i}, ~~~~(i=1,2,\cdots,N).
\end{align}
First, it allows us to have
\begin{align}\label{5.31}
	\theta_1=\eta_1^\dagger,~~~~\theta_2=-\eta_2^\dagger\Xi^\dagger,
\end{align}
where $\eta_j$ is given in \eqref{eta-expression}.
Second, by substituting \eqref{5.30} and \eqref{5.31}  into the Sylvester equation \eqref{Syl-eq}, we are led to
\begin{align}\label{Syl-GI}
	\Xi\Omega-\Omega\Xi^\dagger=\eta_1\eta_1^\dagger-\eta_2\eta_2^\dagger\Xi^\dagger.
\end{align}
Then, subtracting  $\eqref{Syl-GI}^\dagger$ from $\eqref{Syl-GI}$ and making use of Lemma \ref{Lem-1}
show that
\begin{align}\label{Omega-Omega-dagger}
	\Omega=-\Omega^\dagger+\eta_2\eta_2^\dagger,
\end{align}
which provides the relation between $\Omega$ and $\Omega^\dagger$ for the GI DNLS equation.
Note that after substituting \eqref{Omega-Omega-dagger} into the term $\Omega\Xi^\dagger$
in \eqref{Syl-eq},  we arrive at
\begin{align}
	\Xi\Omega+\Omega^\dagger\Xi^\dagger=\eta_1\eta_1^\dagger,
\end{align}
which indicates a conjugate constraint between $\Omega$ and $\Xi$.

Thus we can rewrite the $J_{11},J_{21}$ and $K_{12}$
in terms of only $\eta_j, \Omega$ and $\Xi$:
\begin{align}\label{5.35}
	J_{11}=1-\eta_1^\dagger\Omega^{-1}\Xi^{-1}\eta_1, ~~~~
	J_{21}=\eta_2^\dagger\Xi^\dagger\Omega^{-1}\Xi^{-1}\eta_1, ~~~~
	K_{12}=\eta_1^\dagger\Omega^{-1}\eta_2.
\end{align}
Then, through a direct calculation, we have
\begin{align}
	K_{12}^\dagger J_{11}&=\eta_2^\dagger\Omega^{-\dagger}\eta_1
-\eta_2^\dagger\Omega^{-\dagger}\eta_1\eta_1^\dagger\Omega^{-1}\Xi^{-1}\eta_1 \notag \\
	&=\eta_2^\dagger\Omega^{-\dagger}\eta_1-\eta_2^\dagger(\Omega^{-\dagger}\Xi
+\Xi^\dagger\Omega^{-1})\Xi^{-1}\eta_1=-J_{21},
\end{align}
where $A^{-\dagger}$ denotes the inverse of $A^\dagger$, and $s=J_{11}$ yields
\begin{align}
	J_{11}J_{11}^\dagger&=(1-\eta_1^\dagger\Omega^{-1}\Xi^{-1}\eta_1)
(1-\eta_1^\dagger\Xi^{-\dagger}\Omega^{-\dagger}\eta_1) \notag \\
	&=1-\eta_1^\dagger(\Omega^{-1}\Xi^{-1}+\Xi^{-\dagger}\Omega^{-\dagger})\eta_1
	+\eta_1^\dagger\Omega^{-1}\Xi^{-1}\eta_1\eta_1^\dagger\Xi^{-\dagger}\Omega^{-\dagger}\eta \notag \\
	&=1-\eta_1^\dagger(\Omega^{-1}\Xi^{-1}+\Xi^{-\dagger}\Omega^{-\dagger}-\Omega^{-1}\Xi^{-1}
-\Xi^{-\dagger}\Omega^{-\dagger})\eta_1=1.
\end{align}
Thus we arrive at $p=J_{21}J_{11}^{-1}=-K_{12}^\dagger=\bar q$, which gives rise to the GI equation
\begin{align}
	\mathrm ip_\tau+p_{\zeta\zeta}+2\mathrm ip^2\bar p_\zeta+2|p|^4p=0.
\end{align}
And $u=s^\gamma p=(\bar s)^{-\gamma}\bar q$ solves the GDNLS equation \eqref{GDNLS-equation} with $\delta=1$,
where $s=J_{11}$ is given in \eqref{5.35}.

\section{Concluding remarks}\label{sec-6}

In this paper, we have derived the matrix AKNS hierarchy, the matrix GI hierarchy, the vector GDNLS hierarchy (which includes the GI, KN, and CLL hierarchies as special cases), and the scalar KP and mKP hierarchies as reductions of the ASDYM equations.
These reductions were achieved by utilizing the compatible expression \eqref{gauge-potential} for the gauge potentials $\{A_j\}$, along with the resulting equations \eqref{J-matrix-formulation} and \eqref{K-matrix-formulation}---specifically, the $J$-matrix and $K$-matrix formulations of the ASDYM equations.
All the dimensional reductions are achieved under the same constraint \eqref{dimensional-reduction}
imposed on $K$,
and the above mentioned lower dimensional hierarchies are formulated using the entries of $K$ and $J$.
This approach provides a unified framework for understanding these integrable hierarchies, their Miura-type links, and the bilinearization of DNLS equations.
The variable formulations
in GL(2) gauge group are the following:
\begin{itemize}
	\item The AKNS hierarchy \eqref{AKNS}: $(r,q)=(K_{21},-K_{12})$.
	\item The GI hierarchy \eqref{GI}: $(p,q)=(J_{21}J_{11}^{-1},-K_{12})$.
	\item  The gauge factor between DNLS hierarchies \eqref{gauge-factor}: $s=J_{11}$.
	\item The GDNLS hierarchy \eqref{GDNLS}: $(u,v)=(J_{21}J_{11}^{\gamma-1},-J_{11}^{-\gamma}K_{12})$.
	\item The KP  hierarchy \eqref{KP-hierarchy}: $u=K_{11,x}$.
    \item The mKP hierarchy \eqref{mKP-hierarchy}: $v=-(\ln J_{11})_x$.
\end{itemize}
Among them, the AKNS and GI hierarchies admit matrix version in $\text{GL}(m_1+m_2)$ gauge group, while the GDNLS hierarchy admits a vector version in $\text{GL}(1+m)$ gauge group.
Note that the vector GDNLS hierarchy reduces to the GI, CLL and KN
when $\gamma=0,1,2$, respectively.
Furthermore, it is noteworthy that we have derived the KP and mKP hierarchies via operator-free ASDYM reductions, in contrast to previous approaches \cite{Schiff-1992-2,Ablowitz-2003}.

Within this unified framework,
not only equations, but also  soliton solutions of the ASDYM equations can be systematically mapped to those of
the reduced hierarchies.
The fact about solutions has been demonstrated in Section \ref{sec-5}. We presented Gram-type solutions in terms of quasi-determinants in Theorem \ref{Th-5} and its proof is provided in Appendix \ref{app-B}.
Soliton solutions as special case of Theorem \ref{Th-5} were presented in Sec.\ref{sec-5-1} as well.
Explicit formulae of $K$ and $J$ that  meet   the constraint \eqref{dimensional-reduction}
have been given in Sec.\ref{sec-5-2},
which provide soliton solutions for all the lower dimensional hierarchies obtained through the reductions.
In addition, we also provided two examples to show further complex reductions
of solutions for the focusing NLS and GI DNLS equations.

We end up this paper with mentioning some potential topics inspired from our current work.
Firstly, in contrast to traditional ASDYM reduction techniques \cite{Ablowitz-2003,Mason-Book},
our method
allows a direct correspondence between the variables of the ASDYM equations and those of the resulting integrable hierarchies in reductions.
This means many known solutions of the ASDYM equations might be
used to generated solutions for the reduced integrable hierarchies obtained in this paper,
and vice versa.
Secondly, it is well known that the ASDYM equation has a special type solutions,
namely, instantons. The solutions we have presented in Theorem \ref{Th-5}
do not include such type of solutions.
Whether the reduction links are helpful in constructing instantons
will be considered in the future.
Thirdly, the reductions to the DNLS hierarchies and soliton solution formulations
presented in Sec.\ref{sec-5-2} implies the Cauchy matrix structures of the DNLS type equations,
which have not yet been revealed in the literature.
We will investigate these structures elsewhere.
In addition, the reductions also imply possibility to get Bogoyavlenskii's type
breaking soliton solutions. This will be considered separately.
Finally, while satisfactory discretizations for the ASDYM equation are currently lacking,
but most of the above listed lower dimensional equations have been discretized.
The results of this paper should be helpful in finding an integrable discrete analogue of the ASDYM equation.
Considering that the ASDYM equation has very rich structures in differential geometry,
a successful  discretization for the ASDYM equation will bring more insight
for differential geometry from a discrete perspective.

\vskip 20pt
\subsection*{Acknowledgments}

The work of SSL is supported by  the JSPS Overseas Research Fellowships (P25326).
The work of KIM is supported by the JSPS grant (JP22K03441, 23K22407) and the JSPS Bilateral Program (JPJSBP120247414).
The work of DJZ is supported by the NSFC grant (12271334, 1241540016).

\vskip 20pt

\begin{appendices}

\section{Proof of Theorem \ref{Th-3}}\label{app-A}

In the vector case, the gauge factor $s$ is defined as (cf.\eqref{gauge-factor})
\begin{align}\label{gauge-factor-QP}
	s=\exp\left(-\partial_x^{-1}(QP)\right),
\end{align}
where $P$ and $Q^T$ are $m$-th order column vector, defined in \eqref{PQ-J}.
Similar to the scalar case presented in in Sec.\ref{sec-3-3-1}, in the vector case,  we also have
$QP=-(\ln J_{11})_x$ and $s=J_{11}$.
Thus we can compute the dynamics of $s$ with respect to $P$ and $Q$:
\begin{subequations}
\begin{align}
&\frac{s_x}{s}=-QP, \label{s-PQ-a}
 ~~~~\frac{s_{t_n}}{s}=-\partial_x^{-1}(QP)_{t_n}, \\
&\frac{s_{t_{n+1}}}{s}=-\partial_x^{-1}(QP)_{t_{n+1}}
=-\partial_x^{-1}(QP_x)_{t_n}+\partial_x^{-1}(QPQP)_{t_n}+Q_{t_n}P,
\end{align}
\end{subequations}
where to get the last equation we have made use of \eqref{mc-GI}.
With these in hand,  we then calculate the derivatives of $U$ and $V$ defined in  \eqref{UV-J}.
Since $J_{11}$ is a scalar, we have
$U=J_{21}J_{11}^{\gamma-1}=s^\gamma P$.
Thus we find
\[U_x=s^{\gamma}P_x+\gamma s^{\gamma-1} s_x P=s^{\gamma}(P_x-\gamma PQP),\]
where we have used \eqref{s-PQ-a}.
Similarly, we have
\begin{align*}
		&U_x=s^{\gamma}(P_x-\gamma PQP), \\
		&U_{x,t_n}=s^\gamma(P_{x,t_n}-\gamma (PQP)_{t_n}-\gamma(P_x-\gamma PQP)\partial_x^{-1}(QR)_{t_n}), \\
		&U_{t_{n+1}}=s^\gamma(P_{t_{n+1}}-\gamma P\partial_x^{-1}(QP_x)_{t_n}
+\gamma P\partial_x^{-1}(QPQP)_{t_n}+\gamma PQ_{t_n}P),
	\end{align*}
as well as the derivatives of $V=-J_{11}^{-\gamma}K_{12}=s^{-\gamma}Q$:
\begin{align*}
		&V_{x,t_n}=s^{-\gamma}(Q_{x,t_n}+\gamma(QPQ)_{t_n}+\gamma(Q_x+\gamma QPQ)\partial_x^{-1}(QR)_{t_n}), \\
		&V_{t_{n+1}}=s^{-\gamma}(Q_{t_{n+1}}+\gamma \partial_x^{-1}(QP_x)_{t_n}Q-\partial_x^{-1}(QPQP)_{t_n}Q-Q_{t_n}PQ).
	\end{align*}
As for the nonlinear terms in \eqref{VGDNLS-1}, they are expressed in terms of $P$ and $Q$ as
	\begin{align*}
		&\partial_x^{-1}(U_xV)_{t_n}U=s^\gamma \partial_x^{-1}(P_xQ-\gamma PQPQ)_{t_n}P, \\
		&U_x\partial_x^{-1}(VU)_{t_n}=s^{\gamma}(P_x-\gamma PQP)\partial_x^{-1}(QP)_{t_n}, \\
		&U\partial_x^{-1}(V_xU)_{t_n}=s^\gamma P\partial_x^{-1}(Q_xP+\gamma QPQP)_{t_n}, \\
		&\partial_x^{-1}(UVUV)_{t_n}U=s^\gamma \partial_x^{-1}(PQPQ)_{t_n}P, \\
		&U\partial_x^{-1}(VUVU)_{t_n}=s^\gamma P\partial_x^{-1}(QPQP)_{t_n}. \\
	\end{align*}
Then   substituting the above expressions into the both sides of \eqref{VGDNLS-1}
and making use of \eqref{mc-GI-1}, we find \eqref{VGDNLS-1} holds.
The proof for \eqref{VGDNLS-2} can be done along the same line.
We skip presenting details.

\section{Proof of Theorem \ref{Th-5}}\label{app-B}
	
The proof consists of the following lemmas.

\begin{lemma}\label{lem-A1}
\cite{Gilson-2007,LSS-2025-2}
Consider the quasi-determinant \eqref{quasi-D}
and suppose the derivative of $A$ has the decomposition
		\begin{align}
			\partial_n A=\sum_{l=1}^nE_lF_l,
		\end{align}
where $\partial_n$ is the differential operator with respect to $t_n$,
$E_l$ and $F_l$ are matrices of compatible dimensions.
Then, the derivative of the quasi-determinant is given by
\begin{align}\label{B.2}
			\partial_n
			\begin{vmatrix}
				A & B \\
				C & \boxed{d}
			\end{vmatrix}
			=\partial_n d+
			\begin{vmatrix}
				A & B \\
				\partial_n C & \boxed{0}
			\end{vmatrix}+
			\begin{vmatrix}
				A & \partial_n B \\
				C & \boxed{0}
			\end{vmatrix}+
			\sum_{l=1}^n
			\begin{vmatrix}
				A & E_l \\
				C & \boxed{0}
			\end{vmatrix}
			\begin{vmatrix}
				A & B \\
				F_l & \boxed{0}
			\end{vmatrix}.
		\end{align}
\end{lemma}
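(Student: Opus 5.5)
We first write out the quasi-determinant explicitly and differentiate it directly. By \eqref{quasi-D} the quantity is $Q:=d-CA^{-1}B$. The product rule gives
\[
\partial_n Q=\partial_n d-(\partial_n C)A^{-1}B-CA^{-1}(\partial_n B)-C\,\partial_n(A^{-1})\,B .
\]
For the derivative of the inverse we use $\partial_n(A^{-1})=-A^{-1}(\partial_nA)A^{-1}$, which follows from differentiating $AA^{-1}=I$. We then insert the assumed decomposition $\partial_nA=\sum_{l}E_lF_l$. The last term therefore becomes
\[
+\sum_l CA^{-1}E_lF_lA^{-1}B .
\]

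Next we read each term back as a quasi-determinant with a zero framed entry, again using \eqref{quasi-D}:
\[
\begin{vmatrix} A & B\\ \partial_nC & \boxed{0}\end{vmatrix}=-(\partial_nC)A^{-1}B,\qquad
\begin{vmatrix} A & \partial_nB\\ C & \boxed{0}\end{vmatrix}=-CA^{-1}\partial_nB .
\]
Similarly,
\[
\begin{vmatrix} A & E_l\\ C & \boxed{0}\end{vmatrix}\begin{vmatrix} A & B\\ F_l & \boxed{0}\end{vmatrix}=(-CA^{-1}E_l)(-F_lA^{-1}B)=CA^{-1}E_lF_lA^{-1}B .
\]
Summing these reproduces \eqref{B.2} term by term. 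The minus signs from the two factors in the last product cancel, which matches the $+$ sign from differentiating the inverse.

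No step is a genuine obstacle. The only care needed is bookkeeping: the factors are noncommuting matrices, so order must be preserved, and the block dimensions must be compatible. $E_l$ has as many rows as $A$, and $F_l$ has as many columns as $A$, so each auxiliary quasi-determinant is well defined with a zero block of the appropriate size. The upper limit $n$ of the sum plays no role; any finite decomposition works.
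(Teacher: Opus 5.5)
Your proof is correct: expanding $d-CA^{-1}B$, differentiating with $\partial_n(A^{-1})=-A^{-1}(\partial_nA)A^{-1}$, inserting $\partial_nA=\sum_l E_lF_l$, and reading each term back as a quasi-determinant with a zero framed block gives \eqref{B.2}, with the signs as you tracked them. The paper itself gives no proof of this lemma (it only cites the references), and your direct computation is the standard derivation of the formula, so there is nothing to add.
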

	
\begin{lemma}\label{lem-A2}
The matrix $\Omega$ defined in \eqref{Syl-eq} satisfies the following differential recurrence relations:
\begin{align}\label{Omega-xn+1}
\partial_{n+1}\Omega
=(\partial_{n}\eta)\theta+(\partial_{n}\Omega)\Lambda
=\Xi(\partial_{n}\Omega)-\eta(\partial_{n}\theta),~~~~n\in\mathbb Z.
\end{align}
\end{lemma}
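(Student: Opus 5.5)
The plan is to differentiate the Sylvester equation \eqref{Syl-eq} and then use the uniqueness part of Lemma \ref{Lem-1}. Since $\Xi$ and $\Lambda$ are constant matrices, applying $\partial_{n+1}$ and $\partial_n$ to $\Xi\Omega-\Omega\Lambda=\eta\theta$ gives
\begin{align*}
\Xi(\partial_{n+1}\Omega)-(\partial_{n+1}\Omega)\Lambda&=(\partial_{n+1}\eta)\theta+\eta(\partial_{n+1}\theta),\\
\Xi(\partial_{n}\Omega)-(\partial_{n}\Omega)\Lambda&=(\partial_{n}\eta)\theta+\eta(\partial_{n}\theta).
\end{align*}
The strategy is to show that each proposed right-hand side of \eqref{Omega-xn+1} satisfies the same Sylvester equation as $\partial_{n+1}\Omega$. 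Since $\Xi$ and $\Lambda$ share no eigenvalues, Lemma \ref{Lem-1} then forces equality.

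First I rewrite the right-hand side of the $\partial_{n+1}$ equation using the linear system \eqref{linear-system}. Substituting $\partial_{n+1}\eta=\Xi\partial_n\eta+\eta A^{[0]}_{n+1}$ and $\partial_{n+1}\theta=(\partial_n\theta)\Lambda-A^{[0]}_{n+1}\theta$, the two $A^{[0]}_{n+1}$ terms cancel. This leaves
\begin{align*}
\Xi(\partial_{n+1}\Omega)-(\partial_{n+1}\Omega)\Lambda=\Xi(\partial_n\eta)\theta+\eta(\partial_n\theta)\Lambda.
\end{align*}
This cancellation is the only place where the seed potential enters, and it is the one point to check carefully: the signs in \eqref{linear-system} are exactly what make $\eta A^{[0]}_{n+1}\theta$ cancel.

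Next I test the candidate $X:=(\partial_n\eta)\theta+(\partial_n\Omega)\Lambda$. Using the $\partial_n$-differentiated Sylvester equation to replace $\Xi(\partial_n\Omega)$, I get
\begin{align*}
\Xi X-X\Lambda&=\Xi(\partial_n\eta)\theta-(\partial_n\eta)\theta\Lambda+\bigl[(\partial_n\Omega)\Lambda+(\partial_n\eta)\theta+\eta(\partial_n\theta)-(\partial_n\Omega)\Lambda\bigr]\Lambda\\
&=\Xi(\partial_n\eta)\theta+\eta(\partial_n\theta)\Lambda.
\end{align*}
This matches the right-hand side obtained for $\partial_{n+1}\Omega$. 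Uniqueness in Lemma \ref{Lem-1} then gives $\partial_{n+1}\Omega=X$.

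For the second expression, the $\partial_n$-differentiated Sylvester equation itself gives
\begin{align*}
(\partial_n\eta)\theta+(\partial_n\Omega)\Lambda=\Xi(\partial_n\Omega)-\eta(\partial_n\theta),
\end{align*}
so the two forms in \eqref{Omega-xn+1} coincide. Nothing in the argument depends on the sign of $n$, so the relation holds for all $n\in\mathbb Z$.
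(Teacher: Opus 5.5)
Your proof is correct and follows the same route as the paper. You differentiate the Sylvester equation \eqref{Syl-eq}, rewrite the right-hand side using the linear system, identify the two forms via the $t_n$-differentiated Sylvester equation, and conclude by the uniqueness in Lemma \ref{Lem-1}. The one difference is in your favor: the paper substitutes the vacuum relations $\partial_{n+1}\eta=\Xi(\partial_n\eta)$, $\partial_{n+1}\theta=(\partial_n\theta)\Lambda$, whereas you use the general system \eqref{linear-system} and check explicitly that the $\eta A^{[0]}_{n+1}\theta$ terms cancel, which is what is actually needed when the lemma is applied with a nonzero seed in the proof of Lemma \ref{lem-UV}.
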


\begin{proof}
We begin by taking   $t_{n+1}$-derivative of the Sylvester equation \eqref{Syl-eq}, which gives rise to
\begin{align*}
			\Xi(\partial_{n+1}\Omega)-(\partial_{n+1}\Omega)\Lambda
=(\partial_{n+1}\eta)\theta+\eta(\partial_{n+1}\theta).
		\end{align*}
		Substituting the dispersion relations $	\partial_{n+1}\eta=\Xi(\partial_{n}\eta)$ and $\partial_{n+1}\theta=(\partial_{n}\theta)\Lambda$ into the right-hand side yields
		\begin{align*}
			\Xi(\partial_{n+1}\Omega)-(\partial_{n+1}\Omega)\Lambda=&\Xi(\partial_{n}\eta)\theta
+\eta(\partial_{n}\theta)\Lambda +\Xi(\partial_{n}\Omega)\Lambda-\Xi(\partial_{n}\Omega)\Lambda\\
			=&\Xi((\partial_{n}\eta)\theta+(\partial_{n}\Omega)\Lambda)-(\Xi(\partial_{n}\Omega)
-\eta(\partial_{n}\theta))\Lambda.
		\end{align*}
Noticing that $(\partial_{n}\eta)\theta+(\partial_{n}\Omega)\Lambda
=\Xi(\partial_{n}\Omega)-\eta(\partial_{n}\theta)$ is nothing
but the result of the $t_{n}$-derivative of \eqref{Syl-eq},
we are led  to the recursive relation \eqref{Omega-xn+1}.
\end{proof}

\begin{lemma}\label{lem-UV}
Let \eqref{VDT-ASDYM} be written as
\begin{equation}\label{JK-UV}
J=WJ^{[0]},~~~  K=Z+K^{[0]},
\end{equation}
where $W$ and $Z$ are defined by
		\begin{align}\label{B.5}
			W=
			\begin{vmatrix}
				\Omega & \Xi^{-1}\eta \\
				\theta & \boxed{I}
			\end{vmatrix},~~~~
			Z=-
			\begin{vmatrix}
				\Omega & \eta \\
				\theta & \boxed{0}
			\end{vmatrix}.
		\end{align}
		Then, these matrices satisfy the following relation:
		\begin{align}\label{V-n+1}
			\partial_{n+1}W+(\partial_nZ)W=[W,A_{n+1}^{[0]}].
		\end{align}
	\end{lemma}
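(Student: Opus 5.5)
The plan is a direct computation from the explicit forms $W=I-\theta\Omega^{-1}\Xi^{-1}\eta$ and $Z=\theta\Omega^{-1}\eta$ given by \eqref{quasi-D}. It uses only the linear system \eqref{linear-system}, the Sylvester equation \eqref{Syl-eq} and the recurrence of Lemma \ref{lem-A2}.

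First I check that Lemma \ref{lem-A2} still holds when $A^{[0]}_{n+1}\neq0$. Differentiating \eqref{Syl-eq} in $t_{n+1}$ and using \eqref{linear-system} gives
\[(\partial_{n+1}\eta)\theta+\eta(\partial_{n+1}\theta)=\Xi(\partial_n\eta)\theta+\eta A^{[0]}_{n+1}\theta+\eta(\partial_n\theta)\Lambda-\eta A^{[0]}_{n+1}\theta .\]
The $A^{[0]}$-terms cancel, so the argument of Lemma \ref{lem-A2} goes through verbatim and
\[\partial_{n+1}\Omega=(\partial_n\eta)\theta+(\partial_n\Omega)\Lambda .\]

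Next I expand $\partial_{n+1}W$ by the product rule and substitute \eqref{linear-system} for $\partial_{n+1}\theta$ and $\partial_{n+1}\eta$. The terms containing $A^{[0]}_{n+1}$ are
\[A^{[0]}_{n+1}\theta\Omega^{-1}\Xi^{-1}\eta-\theta\Omega^{-1}\Xi^{-1}\eta A^{[0]}_{n+1}=A^{[0]}_{n+1}(I-W)-(I-W)A^{[0]}_{n+1}=[W,A^{[0]}_{n+1}],\]
which is exactly the right-hand side of \eqref{V-n+1}. It therefore remains to show that
\[-(\partial_n\theta)\Lambda\Omega^{-1}\Xi^{-1}\eta+\theta\Omega^{-1}(\partial_{n+1}\Omega)\Omega^{-1}\Xi^{-1}\eta-\theta\Omega^{-1}(\partial_n\eta)+(\partial_nZ)W=0 .\]

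The key simplification is the identity $\Lambda\Omega^{-1}=\Omega^{-1}\Xi-\Omega^{-1}\eta\theta\Omega^{-1}$, read off from \eqref{Syl-eq}. It gives
\[\Lambda\Omega^{-1}\Xi^{-1}\eta=\Omega^{-1}\eta-\Omega^{-1}\eta\theta\Omega^{-1}\Xi^{-1}\eta=\Omega^{-1}\eta\,W .\]
I then insert the modified Lemma \ref{lem-A2} for $\partial_{n+1}\Omega$ and apply this identity both to the first term and to the $(\partial_n\Omega)\Lambda$ contribution. I also write
\[\partial_nZ=(\partial_n\theta)\Omega^{-1}\eta-\theta\Omega^{-1}(\partial_n\Omega)\Omega^{-1}\eta+\theta\Omega^{-1}(\partial_n\eta).\]
After this, the terms with $\partial_n\theta$ cancel pairwise, and so do the terms with $\partial_n\Omega$. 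What is left is
\[\theta\Omega^{-1}(\partial_n\eta)\bigl(\theta\Omega^{-1}\Xi^{-1}\eta-I+W\bigr),\]
which vanishes by the definition of $W$.

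The only delicate point is the bookkeeping in this last step: one has to use the Sylvester identity in the form $\Lambda\Omega^{-1}\Xi^{-1}\eta=\Omega^{-1}\eta W$ so that every surviving term carries a right factor $W$. Once that is done, the cancellation against $(\partial_nZ)W$ is immediate.
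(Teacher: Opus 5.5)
Your proof is correct and is essentially the paper's argument written in explicit matrix products rather than quasi-determinant notation. It uses the same recurrence from Lemma \ref{lem-A2}, the same extraction of $[W,A^{[0]}_{n+1}]$ via \eqref{linear-system}, and the same Sylvester identity $\Lambda\Omega^{-1}\Xi^{-1}\eta=\Omega^{-1}\eta W$, which is the paper's identity ``for an arbitrary matrix $C$''. Your check that the $A^{[0]}_{n+1}$-terms cancel in Lemma \ref{lem-A2} is a worthwhile addition, since the paper's proof of that lemma tacitly uses only the vacuum relations.
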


\begin{proof}
By using the derivative formula \eqref{B.2} of quasi-determinant
and the recursive relation \eqref{Omega-xn+1}, we have
		\begin{align*}
			\partial_{n+1}W=
			\begin{vmatrix}
				\Omega & \Xi^{-1}\eta \\
				\partial_{n+1}\theta & \boxed{0}
			\end{vmatrix}+
			\begin{vmatrix}
				\Omega & \Xi^{-1}\partial_{n+1}\eta \\
				\theta & \boxed{0}
			\end{vmatrix}+
			\begin{vmatrix}
				\Omega & \partial_n\eta \\
				\theta & \boxed{0}
			\end{vmatrix}
			\begin{vmatrix}
				\Omega & \Xi^{-1}\eta \\
				\theta & \boxed{0}
			\end{vmatrix}+
			\begin{vmatrix}
				\Omega & \partial_n\Omega \\
				\theta & \boxed{0}
			\end{vmatrix}
			\begin{vmatrix}
				\Omega & \Xi^{-1}\eta \\
				\Lambda & \boxed{0}
			\end{vmatrix}.
		\end{align*}
For the first two terms on the right-hand side,
substituting \eqref{linear-system} into them yields
		\begin{align*}
			\begin{vmatrix}
				\Omega & \Xi^{-1}\eta \\
				\partial_{n+1}\theta & \boxed{0}
			\end{vmatrix}=
			\begin{vmatrix}
				\Omega & \Xi^{-1}\eta \\
				(\partial_n\theta)\Lambda & \boxed{0}
			\end{vmatrix}-
			A_{n+1}^{[0]}
			\begin{vmatrix}
				\Omega & \Xi^{-1}\eta \\
				\theta & \boxed{0}
			\end{vmatrix},
		\end{align*}
		and
		\begin{align*}
			\begin{vmatrix}
				\Omega & \Xi^{-1}\partial_{n+1}\eta \\
				\theta & \boxed{0}
			\end{vmatrix}=
			\begin{vmatrix}
				\Omega & \partial_n\eta \\
				\theta & \boxed{0}
			\end{vmatrix}+
			A_{n+1}^{[0]}
			\begin{vmatrix}
				\Omega & \Xi^{-1}\eta \\
				\theta & \boxed{0}
			\end{vmatrix}.
		\end{align*}
		Thus, $\partial_{n+1}W$ can be rewritten as
		\begin{align}\label{B.6}
			\partial_{n+1}W=
			\begin{vmatrix}
				\Omega & \Xi^{-1}\eta \\
				(\partial_n\theta)\Lambda & \boxed{0}
			\end{vmatrix}+
			\begin{vmatrix}
				\Omega & \partial_n\Omega \\
				\theta & \boxed{0}
			\end{vmatrix}
			\begin{vmatrix}
				\Omega & \Xi^{-1}\eta \\
				\Lambda & \boxed{0}
			\end{vmatrix}+
			\begin{vmatrix}
				\Omega & \partial_n\eta \\
				\theta & \boxed{0}
			\end{vmatrix}W+
			[W,A_{n+1}^{[0]}].
		\end{align}
		On the other hand, the direct expansion of $-(\partial_nU)V$ indicates
		\begin{align*}
			-(\partial_{n}Z)W=
			\begin{vmatrix}
				\Omega & \partial_n\eta \\
				\theta & \boxed{0}
			\end{vmatrix}W+
			\begin{vmatrix}
				\Omega & \eta \\
				\partial_n\theta & \boxed{0}
			\end{vmatrix}
			\begin{vmatrix}
				\Omega & \Xi^{-1}\eta \\
				\theta & \boxed{I}
			\end{vmatrix}+
			\begin{vmatrix}
				\Omega & \partial_n\Omega \\
				\theta & \boxed{0}
			\end{vmatrix}
			\begin{vmatrix}
				\Omega & \eta \\
				I & \boxed{0}
			\end{vmatrix}
			\begin{vmatrix}
				\Omega & \Xi^{-1}\eta \\
				\theta & \boxed{I}
			\end{vmatrix}.
		\end{align*}
Then, noting that the following relation holds for an arbitrary matrix $C$ with a suitable size,
		\begin{align*}
			\begin{vmatrix}
				\Omega & \eta \\
				C & \boxed{0}
			\end{vmatrix}
			\begin{vmatrix}
				\Omega & \Xi^{-1}\eta \\
				\theta & \boxed{I}
			\end{vmatrix}
			&=
			-C\Omega^{-1}\eta+C\Omega^{-1}\eta\theta\Omega^{-1}\Xi^{-1}\eta \\
			&=-C\Omega^{-1}\eta+C(\Omega^{-1}\Xi-\Lambda\Omega^{-1})\Xi^{-1}\eta
			=\begin{vmatrix}
				\Omega & \Xi^{-1}\eta \\
				C\Lambda & \boxed{0}
			\end{vmatrix},
		\end{align*}
thus we have
		\begin{align}\label{B.7}
			-(\partial_{n}Z)W=
			\begin{vmatrix}
				\Omega & \partial_n\eta \\
				\theta & \boxed{0}
			\end{vmatrix}W+
			\begin{vmatrix}
				\Omega & \Xi^{-1}\eta \\
				(\partial_n\theta)\Lambda & \boxed{0}
			\end{vmatrix}+
			\begin{vmatrix}
				\Omega & \partial_n\Omega \\
				\theta & \boxed{0}
			\end{vmatrix}
			\begin{vmatrix}
				\Omega & \Xi^{-1}\eta \\
				\Lambda & \boxed{0}
			\end{vmatrix}.
		\end{align}
Combining \eqref{B.6} and \eqref{B.7}, we are led to \eqref{B.5}.

\end{proof}
	
Finally, based on \eqref{JK-UV} and \eqref{5.4},
through a direct calculation we have
\begin{align*}
\partial_{n+1}{J}+(\partial_{n}K)J &
=(\partial_{n+1}W) J^{[0]} - W A^{[0]}_{n+1} J^{[0]}+(\partial_{n}Z) W J^{[0]}+ A^{[0]}_{n+1} W J^{[0]} \\
		&=(\partial_{n+1}W+(\partial_nZ)W-[W,A_{n+1}^{[0]}])J^{[0]},
\end{align*}
which vanishes due to \eqref{V-n+1}.
Thus, $J$ and $K$ satisfy \eqref{gauge-potential} and
we complete  the proof.

\end{appendices}

\vskip 20pt

\small{

\end{document}